\renewcommand{\vec}[1]{\boldsymbol{#1}}
\newtheorem{thm}{Theorem}
\newtheorem{lem}{Lemma}
\newtheorem{defn}{Definition}
\newtheorem{exmp}{Example}
\newtheorem{rem}{Remark}
\newtheorem{pro}{Problem}
\begin{document}
%
\title{A k-hop Collaborate Game Model: Adaptive Strategy to Maximize Total Revenue}
%
%
%

\author{Jianxiong Guo,
	Weili Wu~\IEEEmembership{Member,~IEEE}
	\IEEEcompsocitemizethanks{\IEEEcompsocthanksitem J. Guo, and W. Wu are with the Department of Computer Science, Erik Jonsson School of Engineering and Computer Science, Univerity of Texas at Dallas, Richardson, TX, 75080 USA\protect\\
		E-mail: jianxiong.guo@utdallas.edu
	}
	\thanks{Manuscript received April 19, 2005; revised August 26, 2015.}}

%
%

\markboth{Journal of \LaTeX\ Class Files,~Vol.~14, No.~8, August~2015}%
{Shell \MakeLowercase{\textit{et al.}}: Bare Demo of IEEEtran.cls for IEEE Journals}
%



\maketitle

\begin{abstract}
In Online Social Networks (OSNs), interpersonal communication and information sharing are happening all the time, and it is real-time. When a user initiates an activity in OSNs, immediately, he/she will have a certain influence in his/her friendship circle naturally, some users in the initiator's friendship circle will be attracted to participate in this activity. Based on such a fact, we design a k-hop Collaborate Game Model, which means that an activity initiated by a user can only influence those users whose distance are within k-hop from this initiator in OSNs. Besides, we introduce the problem of Revenue Maximization under k-hop Collaborate Game (RMKCG), which identifies a limited number of initiators in order to obtain revenue as much as possible. Collaborate Game Model describes in detail how to quantify revenue and the logic behind it. We do not know how many followers would be generated from an activity in advance, thus, we need to adopt an adaptive strategy, where the decision who is the next potential initiator depends on the results of past decisions. Adaptive RMKCG problem can be considered as a new stochastic optimization problem, and we prove it is NP-hard, adaptive monotone, but not adaptive submodular. But in some special cases, it is adaptive submodular and an adaptive greedy strategy can obtain a $(1-1/e)$-approximation by adaptive submodularity theory. Due to the complexity of our model, it is hard to compute the marginal gain for each candidate user, then, we propose a convenient and efficient computational method. The effectiveness and correctness of our algorithms are validated by heavy simulation on real-world social networks eventually.
\end{abstract}

\begin{IEEEkeywords}
Collaborate Game Model, Online Social Networks (OSNs), Adaptive Strategy, Stochastic Optimization, Submodularity, Approximation Algorithm
\end{IEEEkeywords}

%
\IEEEpeerreviewmaketitle

\section{Introduction}
%
%
%
%

\IEEEPARstart{T}{he} online social platforms were developing quickly in the last decades and derived a series of famous technological companies, such as Facebook, Twitter, LinkedIn and Tencent. There are billions of people sharing their emotions and discussing current affairs in these platforms. There are more than $1.52$ billion users active daily on Facebook and $321$ million users active monthly on Twitter. The users' friendship in these social platforms can be described by an online social network (OSNs), which is an undirected graph, including individuals and their relationship. Domingos and Richardson \cite{domingos2001mining} \cite{richardson2002mining} was the first one to propose the concept of "Viral Marketing", which aims to attract follow-ups as many as possible by giving free or coupon samples to the most influential users in OSNs. This groundbreaking researches has had a profound impact on later generations. Inspired by that, Influence Maximization was proposed to model the spread of trust, advertisements or innovations abstractly. Kempe et al. \cite{kempe2003maximizing} regarded it as a combinatorial optimization, which aims to select a small subset of users such that the expected number of follow-up adoptions can be maximizied under the size constraint. Besides, they propsoed two classical diffusion models called Independent Cascade model and Linear Threshold model, and prove it is NP-hard, monotone and submodualr under these two models. Since this seminal work, a series of variant problems and models used for different scenarios and constraints \cite{chen2011influence} \cite{zhang2014recent} \cite{8850214} \cite{guo2019novel} appeared constantly, such as profit maximization, rumor blocking and adaptive submodular problem. And some researchers focus on the extension \cite{chen2016robust} \cite{leskovec2007cost} \cite{wang2017activity} \cite{zhang2016profit} \cite{yuan2017active} \cite{8952599} from a practical viewpoint.

However, most existing researches about maximization problem, regardless of influence or profit, all base on counting the number of single user that follows our "influence". This model is indeed effective and valid in most cases, but it does not cover all scenarios. Considering some user in a social platform, such as Facebook, is invited by some organizations to launch an activity, after lauching, his/her friends, or friends of friends, may be interested in this activity and choose to participate in it. At this time, our influence or profit depends on the benefits from activity, which is related to the number of people involved in this event. Based on that, we propose Collaborate Game Model. Considering a game company, they plan to promote their multiplayer game in some social platform by inviting some users to play this game. Once these users accept this invitation, they initate this game, called initiator, and he/she will attract friends within a certain range to participate in it. In our model, we have a k-hop assumption that the maximum influence range is k-hop from the initiator, shown as Fig. \ref{fig1} as an example. Obviously, the revenue should be calculated on a game-by-game manner instead of user-by-user. For a single game, the more people involved, the greater the revenue. But the relation between revenue and the number of participants is not linear, we construct a valid quantitative model to compute the revenue from an initiated game. 
\begin{rem}
Even that our model is called as Collaborate game model, but here, "game" is abstract concept, which is not just limited to the game. It can be extended to other multiplayer activity circumstances.
\end{rem}
\noindent
For example, president election, in order to win a high level of support, each camp will promote and advocate their own presidential candidate by inviting some people to support them. If an invited person expresses his/her support for a presidential candidate on social media, it is possible to attract more people in his/her social circle to support this candidate together. It can be regarded as an activity, and the revenue can be gained from that defined as before.

In this paper, we propose the problem of Revenue Maximization under k-hop Collaborate Game (RMKCG): By inviting limited users to initiate a k-hop Collaborate Game, it aims to maximize the expected total revenue to the company or organization. The objective function of RMKCG problem is based on three parameters: budget $b$, at most $b$ invitations can be sent; acceptance vector $\vec{\theta}$: the probability for each user to accept the invitation from company; revenue vector $\vec{R}$: the revenue generated by those participant with different level. Before sending an invitation to a potential initiator, we do not know whether he/she will accept it and how many users around him/her will follow and join together. Thus, we adopt adaptive strategy, where we need to observe the change of state of both users and networks after sending an invitation, then decide who will be the next one. The adaptive RMKCG problem is called A-RMKCG. Golovin et al. \cite{golovin2011adaptive} formulated the theory of adaptive optimization, and they proposed two concepts: adaptive monotonicity and adaptive submodularity, if satisfying that, a $(1-1/e)$-approximation can be obtained by adaptive greedy strategy. However, the challenge is that we prove our A-RMKCG problem is not adaptive submodular, and the computation of marginal revenue is very inefficient. They will be solved partially in this paper. Our contributions are summarized as follows:

\begin{enumerate}
	\item Collaborate Game Model is a totally new model, which is a generalization to a class of real problems. Based on that, we propose RMKCG problem and its adaptive version A-RMKCG firstly.
	\item We prove A-RMKCG problem is NP-hard, adaptive monotone, but not adaptive submodular.
	\item We propose a new method to compute the marginal gain, which overcomes the difficulty of computation, improve efficiency and reduce running time greatly. Then, we prove A-RMKCG problem is adaptive submodular under some special cases, where we can obtain $(1-1/e)$-approximation by Adaptive-Invitation policy.
	\item Our proposed algorithms are evaluated on real world social networks, which verify the effectiveness and correctness of them.
\end{enumerate}

\textbf{Organization:} Sec. \uppercase\expandafter{\romannumeral2} reviews the related survey. Sec. \uppercase\expandafter{\romannumeral3} descirbes model, background knowledges and problem formulation. The solution for A-RMKCG are presented in Sec. \uppercase\expandafter{\romannumeral4}. Sec. \uppercase\expandafter{\romannumeral5} is the theoretical analysis for A-RMKCG. Sec. \uppercase\expandafter{\romannumeral6} discusses experiment and Sec. \uppercase\expandafter{\romannumeral7} is conclusion.

\begin{figure}[!t]
	\centering
	\includegraphics[width=3.5in]{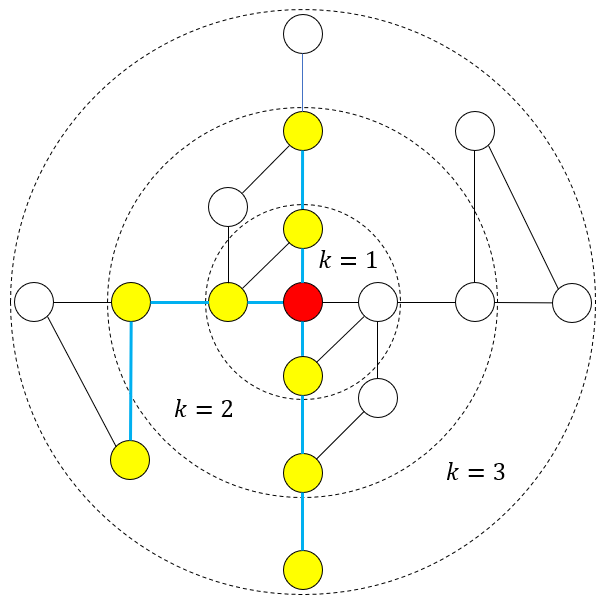}
	\caption{An example that shows a single activity: Here, the red node is initiator, and the yellow node is those users that participate the activity launched by red node.}
	\label{fig1}
\end{figure}

\section{Related Work}
Domingos and Richardson \cite{domingos2001mining} \cite{richardson2002mining} was the first to study viral marketing and the value of customers in social networks. Kempe et al. \cite{kempe2003maximizing} generalized viral marketing to influence maximization. It can be considered as a combinatorial optimization probelm, and they proposed a greedy algorithm implemented by Monte-Carlo simulations. Afterward, there have been many varients of influence maximization, among which, profit maximization is related to us. \cite{lu2012profit} \cite{tang2016profit} proposed the problem how to select the most influential seed nodes that can maximize the profit. Lu et al. \cite{lu2012profit} combined prices and valuation by extending the Linear Threshold model, then they used a heuristic unbudgeted greedy framework to solve this problem. Tang et al. \cite{tang2016profit} provided a strong approximation guarantee with the help of the methods in unconstrained submodular maximization. Other researches on pricing strategies can refer to \cite{arthur2009pricing} \cite{zhou2015bilevel} \cite{lu2016pricing} \cite{yang2016continuous}.

As we know, the approximation ratio of greedy algorithm is $(1-1/e)$, proved by Nemhauser et al. \cite{nemhauser1978analysis}, when the objective function is monotone and submodular. Golovin et al. \cite{golovin2011adaptive} extended this work to adaptive version and obtained the same approximation ratio when the objective function is adaptive monotone and adaptive submodular under the full-adoption feedback model. Under the myopic feedback model, Peng et al. \cite{NIPS2019_8795} had proven that adaptive greedy algorithm admits a constant approximation and Salha et al. \cite{salha2018adaptive} proposed the myopic adaptive greedy policy that is guaranteed to provide a $(1-1/e)$-approximation under a variant of the independent cascade model. Applied it to social networks, adaptive influence maximization with partial feedback model was proposed and studied \cite{ijcai2017-546} \cite{tang2020influence}. Tong et al. \cite{tong2019adaptive} provided a systematic studies on the adaptive influence maximization problem, where the objective function is not adaptive submodular, and they introduced the concept of regret ratio in designing the seeding strategy. Smith et al. \cite{smith2018approximately} introduced the adaptive primal curvature to obtain an approximation ratio for non-submodular cases. When the objective function is not adapative monotone, but adaptive submodular, Gotoves et al. \cite{gotovos2015non} extended random greedy algorithm to adaptive version and obtained a $(1/e)$-approximation. Other researches on the application of adaptive strategy can refer to \cite{gabillon2013adaptive} \cite{fern2017adaptive} \cite{yuan2017adaptive} \cite{han2018efficient}.

\section{Problem and Preliminaries}
In this section, we discuss about the formulation of our problem and preliminary knowledges, mainly including network model and notation, to the rest of paper.
\subsection{Collaborate Game}
To understand our game model, we need to know how it works firstly. To generalize our problem, we imagine a social relationship network $G$, from Facebook or WeChat, that a game company want to promote their new collaborate game based on this targeted network $G$. The targeted network $G$ can be defined arbitrarily by game company, for example, sub-network of children, sub-network of students or all users in a network according to the targeted users.

Let us describe how a collaborate game works. First, the game company needs to acquire the targeted network $G$ through some social platforms, such as WeChat, the most popular social software in China. It contains some necessary users' information we can exploit, for example, age, gender and career. More importantly, most social media softwares have a real-name certification system, which helps us extract the targeted network according to the properties of game. For Tencent, it is more convenient, because Tencent is both a social software company and a game company. Thus for the game promotion department, such scenes are staged every day. When the company invites a user $u$ to play this game, if he/she accepts it, he/she will do as follows:
\begin{enumerate}
	\item Initiate this game as initiator.
	\item Invite his/her friends to participate in this game.
	\item If his/her friends are willing to join the game, they will invite their friends to participate in this game again as before.
\end{enumerate}
Even though this game is a multiplayer collaborate game, it does not mean they can invite their friends to participate without limit. Thus, we have a k-hop assumption, where the invitations can only last for $k$ rounds. Explaining in detail, we call initiator as 0-hop participant. The friends of 0-hop participant that accept 0-hop participant's invitation are called 1-hop participants. The friends of 1-hop participants that accept 1-hop participants' invitation are called 2-hop participants, repeated until k-hop participants. In other words, the distance of all participants for this game from initiator cannot be larger than k-hop in targeted network $G$, which limit the scope of a single game.

However, for the game company, it is unrealistic to send too many invitations to get more initiators because they may not be able to get complete network information and determine who will be a potential initiator. Then, for the promotion, giving too many invitations is sometimes counterproductive, because this will make their game very cheap and lack competitiveness. Relied on the above analysis, our problem is who should be the next initiator that are the most beneficial to maximize the total revenue.

\subsection{Network Descriptions}
A targeted network can be given by a undirected graph $G=(V,E)$ where $V=\{v_1,v_2,...,v_n\}$ is the set of $n$ users. $E=\{e_1,e_2,...,e_m\}$ is the set of $m$ undirected edges, where $e=\{u,v\}\in E$ indicates that there exists friendship between user $u$ and user $v$. The node set and edge set for graph $G$ can be referred as $V(G)$ and $E(G)$ respectively. For an edge $e=\{u,v\}$, $u$ is a friend of $v$, naturally, $v$ is a friend of $u$ as well. We use $N(v)$ to denote the set of friends of node $v$. Here, we define a probability $p_e\in[0,1]$ for each edge $e\in E(G)$, which represents the degree of intimacy between user $u$ and user $v$ in our model. In other words, when user $u$ participates in this game, whether user $v$ is willing to participate in this game with user $u$ if $u$ invites $v$. Obviously, the more intimate the two friends are, the more likely they are willing to play games together.

For the company, they cannot know exactly how much the degree of intimacy between two users is, thus, the network information is incomplete. From the perspective of data mining, they can predict the probability between two users by learning their communication log to judge the degree of closeness between them. This is beyond the scope of this paper and we will not discuss here. Once a user $u$ accepts the invitation from game company as an initiator, the statuses of those edges whose distance is k-hop from user $u$ are partially known. According to the previous observation, the company can make a decision about which potential initiator it the best next, which is the reason why it is called an adaptive strategy.

Then, there is a natural question whether the user would accept this invitation as an initiator when he/she receives it from the game company. We can define an acceptance probability $\theta_u\in[0,1]$ for each user $u\in V(G)$, which describes the extent to which users are interested in launching this game when he/she recieves the invitation. From here, an acceptance vector $\vec{\theta}=(\theta_1,\theta_2,...,\theta_n)$ is formulated to give the acceptance probability for each user. It is complicated because different users have a divergent tendency to this game. For example, a game enthusiast may be more inclined to accept the invitation, or a user with many friends may be happy to accept ths invitation due to the fact that this game requires multiple people to participate in. Thus, it is flexible about how to define acceptance vector. In this paper, we assume $\theta_u$ is uniformly distributed in interval $[0,1]$.

\subsection{Problem Definition}
The company that promotes a game is naturally hoping to get revenue from it, and how can we characterize the revenue from a game. Obviously, the more people involved, the greater the revenue they can get. However, it is not enough to measure the income just by the number of people playing the game. This is a multiplayer game, the actual number of initiated games is much less than the number of all participants. Thus, we need to consider comprehensively both the number of initiated games and the number of participants in each game. Usually, for each initiatied game, we believe the marginal revenue is diminishing gradually, in other words, the increasing rate in earnings is gradually slowing down as the number of participants increases. Let us see an example:
\begin{exmp}
	A user $u$ accepts the invitation from the game company to be an initiator, we assume that this company can obtain $5$ units revenue from him. User $u$ invites his/her friends, soon afterward, the company can obtain $4$ units revenue from each that accepts the invitation from user $u$, namely, $4$ units/1-hop participant. Then, we have $3$ units/2-hop participant, $2$ units/3-hop participant, and so on until k-hop participant.
\end{exmp}
\noindent
This assumption is valid because we believe the promotion effect brought by the people who first join this game is larger than the people who join the game afterward. Those who join this game first need more people to get involved since they need collaboration, however, people who join this game later do not have this requirement.

In order to quantify the revenue the company obtains, we can define a revenue vector $\vec{R}=(R_0,R_1,R_2,...,R_k)$, where $R_i\in\mathbb{Z}^+$ and $i\in[k]=\{0,1,2,...,k\}$. Here, $R_i$ denotes the revenue the company can gain from a i-hop participant, thus, we have $R_0\geq R_1\geq R_2\geq...\geq R_k$ according to our previous assumption. In the targeted network, the number of initiators (0-hop participants) is clear, but other participants are likely to be ambiguous. For example, a user $u$ is a i-hop participant to one initiator and j-hop participant to another initiator, where $i<j$. At this moment, we consider user $u$ will choose to be a i-hop participant to the company. We call it as "Tendency Assumption". Based on the above model, for a game company, they aim to give a certain number of invitations to initial users such that maximizing the expected revenue. The problem of Revenue Maximization under k-hop Collaborate Game (RMKCG) is defined as follows:
\begin{pro}[RMKCG]
	Given a targeted network $G=(V,E)$, an acceptance vector $\vec{\theta}$, a budget $b$ and a revenue vector $\vec{R}$, we aim to find a subset $D\subseteq V(G)$ and $|D|\leq b$ such that the expected total revenue the company gains can be maximized when inviting those users in $D$ to be initiators.
\end{pro}
\noindent
For adaptive strategy, the RMKCG problem can be transformed to find a policy $\pi$, where the company will send an invitation to user $u\in D$ step by step. When the user $u$ accepts to be an initiator, the partial statuses of edges that are within k-hop from $u$ are known, thus, the network information can be updated. A valid policy $\pi$ can at most send $b$ invitations.

The adaptive strategy can be considered as a stochastic process, whose stochasticity is mainly from the following two aspects: For user $u$, the probability ($\theta_u$) that he/she accepts the invitation from the company as an initiator; For user $v$, the probability ($p_{uv})$ that he/she joins the game when his/her friend $u$ invites him/her to play together. Now, we can define the state of targeted network. Given a targeted network $G=(V,E)$, for each user $u\in V(G)$, the state of $u$ can be denoted by $X_u\in\{0,1\}\cup\{?\}$, where $X_u=1$ means user $u$ accepts to be an initiator because of the company's invitation and $X_u=0$ means user $u$ rejects to be an initiator. $X_u=?$ means user $u$ is unknown, who did not receive an invitation from the company. The states of all users are $?$ at the beginning. Similarly, for each edge $e=\{u,v\}\in E(G)$, the state of $e$ can be denoted by $Y_e=\{0,1\}\cup\{?\}$, where $Y_e=0$ indicates user $u$ (resp. $v$) did not accept the invitation from user $v$ (resp. $u$) and $Y_e=1$ indicates user $u$ (resp. $v$) is willing to play the game with user $v$ (resp. $u$). Once determined, it cannot be changed. $Y_e=?$ indicates there is no invitation that happens between user $u$ and user $v$. The states of all edges are $?$ at the beginning.

After defining the states of users and edges, we have a function mapping likes
\begin{equation*}
\phi=\{X_u\}_{u\in V(G)}\cup\{Y_e\}_{e\in E(G)}\rightarrow\{0,1\}^{V(G)}\cup\{0,1\}^{E(G)}
\end{equation*} 
called a realization (the states of all items in $V(G)$ and $E(G)$). Thus, we say that $\phi(u)$ is the state of user $u\in V(G)$ and $\phi(e)$ is the state of edge $e\in E(G)$ under realization $\phi$. We use $\Phi$ to denote a random realization and $\Pr(\phi)=\Pr[\Phi=\phi]$ as the probability distribution over all realizations. Besides, each realization should be consistent. Here, each user can only be one of state in $X_u\in\{0,1\}\cup\{?\}$, and each edge be one of state in $\{0,1\}\cup\{?\}$ identically. Considering the adaptive version of RMKCG problem (A-RMKCG Problem), the game company does as follows:
\begin{enumerate}
	\item Initialize $D=\emptyset$
	\item Send an invitation to user $u\in V(G)\backslash D$, observe $\Phi(u)$
	\item Update $D=D\cup\{u\}$
	\item If $\Phi(u)=0$, go to step (6)
	\item Update the states of edges whose distance are less than k-hop from user $u$, and record these update as a reference for next decision
	\item If $|D|<k$, go back to step (1); Otherwise, stop
\end{enumerate}
\noindent
From step (5), we know that this process satisfies the definition of full-adoption feedback model, where the company is able to see the entire diffusion process of each initiator they select and then, decide who will be the next one.

\begin{figure}[!t]
	\centering
	\includegraphics[width=3.5in]{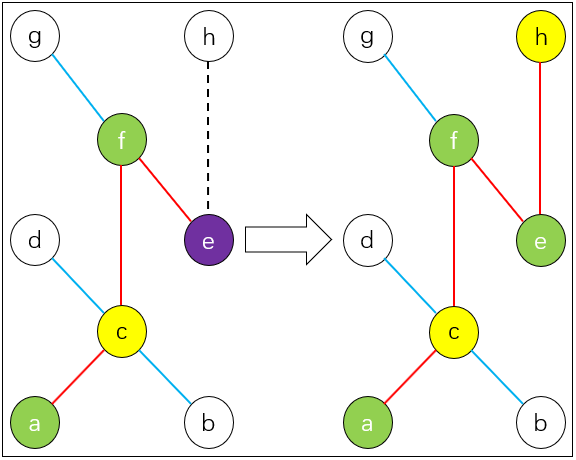}
	\caption{An example that shows the adaptive process: Here, we assume $k=2$, where the state of yellow nodes are $1$, other nodes are $?$; Green and purple nodes are 1-hop and 2-hop participants; The state of red, blue and dotted line are $1$, $0$ and $?$. First, we invite node $c$, the states shown as left part; Then, we invite node $h$, the states shown as right part. We can see that  node $e$ is changed from 2-hop to 1-hop participant because of node $h$.}
	\label{fig2}
\end{figure}

Let us see Fig. \ref{fig2} as a concrete example. We define $H(\pi,\phi)$ as the set of all users who are invited by the game company according to strategy $\pi$ under the realization $\phi$. After each invitation, shown as above process, the states of partial edges can be updated, our observation so far can be represented as a partial realization $\psi$, which is a function of observed objects to their states. Then, $dom(\psi)$ is referred to as the domain of $\psi$, namely, observed users and edges in $\psi$. A partial realization $\psi$ is consistent with a realization $\phi$ if they are equal everywhere in the domain of $\psi$, denoted by $\phi\sim\psi$. If $\psi$ and $\psi'$ are both consistent with some $\phi$, and $dom(\psi)\subseteq dom(\psi')$, we say $\psi$ is a subrealization of $\psi'$, denoted by $\psi\subseteq\psi'$. \cite{golovin2011adaptive} Besides, we denote by $dx(\psi)$ the observed users in the domain of $\psi$, and $dy(\psi)$ the observed edges in the domain of $\psi$.

Let $\pi$ be an adaptive invitation policy of the company. The total revenue gained according to policy $\pi$ under the realization $\phi$ can be defined as follows:
\begin{equation}
f(H(\pi,\phi),\phi)=\sum_{i\in[k]}\sum_{u\in D_i(\pi,\phi)}R_i
\end{equation}
where $[k]=\{0,1,2,...,k\}$ and $D_i(\pi,\phi)$ is the set that contains all i-hop participants to the company according to strategy $\pi$ under realization $\phi$, thus, we have
\begin{flalign}
&D_0(\pi,\phi)=\{u|u\in H(\pi,\phi),\phi(u)=1\}\\
&D_i(\pi,\phi)=\nonumber\\
&\left\{u|\exists v\in D_{i-1}(\pi,\phi),\phi(\{u,v\})=1\right\}\backslash\bigcup_{j=0}^{i-1}D_j(\pi,\phi)
\end{flalign}
Finally, we can evaluate the performance of a policy $\pi$ by its expected revenue, and we have
\begin{equation}
f_{avg}(\pi)=\mathbb{E}_\Phi[f(H(\pi,\Phi),\Phi)]
\end{equation}
where the expectation is taken with respect to $\Pr(\Phi=\phi)$. The goal of A-RMKCG problem is to find a policy $\pi^*$ such that $\pi^*\in\arg\max_{\pi}f_{avg}(\pi)$ subject to $|H(\pi,\phi)|\leq b$ for all realization $\phi$.

\section{Algorithm}
In this section, we propose our algorithm to solve A-RMKCG problem, Adaptive-Invitation Algorithm, and introduce how to compute marginal gain efficiently.

\subsection{Adaptive-Invitation Algorithm}
According to the description of A-RMKCG problem, inspired by adaptive greedy policy proposed by \cite{golovin2011adaptive}, Adaptive-Invitation Algorithm is proposed, which can be divided into two steps generally: In the first step, we send an invitation to user $u$ that can obtain the most increment of expected revenue according to partial realization $\psi$. This step can be generalized to Conditional Expected Marginal Revenue, which is slightly different from \cite{golovin2011adaptive},

\begin{defn}[Conditional Expected Marginal Revenue]
	Given a partial realization $\psi$ and an user $u$, the conditional expected marginal revenue of $u$ conditioned on have observed $\psi$ is
	\begin{equation*}
	\Delta(u|\psi)=\mathbb{E}\left[f(dx(\psi)\cup\{u\},\Phi)-f(dx(\psi),\Phi)|\Phi\sim\psi\right]
	\end{equation*}
\end{defn}
\noindent
In our A-RMKCG problem, $\Delta(u|\psi)$ is the expected total revenue based on previous invited users and observed edges, and the expectation is taken over all realization that are consistent with current partial realization $\psi$. After inviting user $u$ as an initiator, in the second step, we need to observe the state of $u$. If $u$ accepts this invitation, it updates current observation, namely, updates the states of edges whose distance within k-hop from $u$; If $u$ rejects this invitation, back to the first step to invite next user. The Adaptive-Invitation Algorithm is shown in Algorithm \ref{a1}. They will be executed iteratively until the number of invitation by company is larger than $b$.
\begin{exmp}
We consider a graph with five users and four edges $\{v_1,v_2\}$, $\{v_2,v_3\}$, $\{v_3,v_4\}$ and $\{v_4,v_5\}$. Suppose that $p_e=0.5$ for each edge $e$ and $\theta_v=1$ for each node $v$, budget $b=2$ and hop $k=2$. According to Algorithm \ref{a1}, the company will first invite $v_3$ undoubtedly. Suppose that $v_2$ accepts, but $v_4$ rejects their friend $v_1$'s invitations, then $v_1$ rejects his/her friend $v_2$'s invitations. Now, the states (partial realization $\psi_1$) can be shown as follows:
\begin{figure}[H]
	\centering
	\includegraphics[width=3.5in]{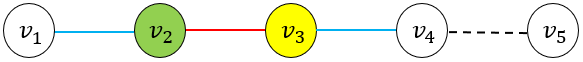}
\end{figure}
\noindent
What will happen next? We should compute their marginal gains according to $\psi_1$. We have $\Delta(v_1|\psi_1)=R_0$, $\Delta(v_2|\psi_1)=R_0-R_1$, $\Delta(v_4|\psi_1)=R_0+R_1/2$ and $\Delta(v_5|\psi_1)=R_0+R_1/2$. Thus, the company will invite $v_4$ or $v_5$ as the second initiator. Suppose that the company invite $v_4$, and $v_5$ accepted his/her friend $v_4$'s invitation. Now, the states (partial realization $\psi_2$) can be shown as follows:
\begin{figure}[H]
	\centering
	\includegraphics[width=3.5in]{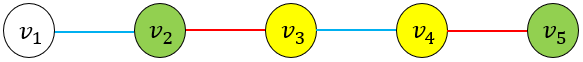}
\end{figure}
\noindent
Here, we have $H(\pi_a,\psi_2)=\{v_3,v_4\}$ and $f(H(\pi_a,\psi_2),\psi_2)=2(R_0+R_1)$ eventually.
\end{exmp}

\begin{algorithm}[!t]
	\caption{\textbf{Adaptive-Invitation $(G,f,\vec{\theta},\vec{R},b,k)$}}\label{a1}
	\begin{algorithmic}[1]
		\STATE Initialize: $H\leftarrow\emptyset$, $\psi\leftarrow\emptyset$
		\FOR {$i=1$ to $b$}
		\FOR {user $u\in V(G)\backslash H$}
		\STATE $\Delta(u|\psi)\leftarrow$ Computed by Monte-Carlo simulations or simplified method in Algorithm \ref{a2}
		\ENDFOR
		\STATE Select $u_i\in\arg\max_u\Delta(u|\psi)$
		\STATE $H\leftarrow H\cup\{u_i\}$
		\IF {$u^*$ accepts the invitation}
		\STATE Update $\psi$, the states of the edges whose distance are less than $k$-hop from $u$
		\ENDIF
		\ENDFOR
		\RETURN $f(H,\psi)$
	\end{algorithmic}
\end{algorithm}

Actually, in line 3 to 4 of Algorithm \ref{a1}, we do not need to compute $\Delta(u|\psi)$ for each user $u\in V(G)\backslash H$ again at each iteration, becuase the value of $\Delta(u|\psi)$ depends only on the states of edges within k-hop from $u$. If they do not change, the value of $\Delta(u|\psi)$ is consistent with last iteration natrually. In iteration $i$, provided that the distance between user $u_{i-1}$ and $u$ is larger than $2k$, the states of edges within $k$-hop from $u$ maintain consistency with iteration $i-1$, thus, we do not need to compute $\Delta(u|\psi)$ again. Therefore, it is necessary to maintain a record for the value of each $\Delta(\cdot|\psi)$. In iteration $i$, we check whether the distance between $u$ and $u_{i-1}$ is less or equal to $2k$, if yes, update the value of $\Delta(u|\psi)$. It improves the efficiency of Algorithm \ref{a1} partially.
\begin{rem}
Unfortunately, we know that the social networks usually show a property of small world, which implies that when $k=3$, most of the users are within 6-hop from users $u_{i-1}$. Hence, only a small fraction of nodes do not require an update when $k\geq 3$. Of course, if we can know the structure information of the targeted network in advance, it will help us decide whether we need to use the above method, because finding the distance from user $u_{i-1}$ is time-consuming.
\end{rem}

\subsection{Gain Computation}
In line 4 of Algorithm \ref{a1}, we need to compute the value of $\Delta(u|\psi)$ given current realization $\psi$, which is deterministic. Due to the fact that the $\Delta(u|\psi)$ is an expectation with respect to $\Phi\sim\psi$, the general method is to use Monte-Carlo simulations, where we run this diffusion process many times and then take the average of them. But it is extremely time-comsuming in order to get an accurate value. Thus, in this subsection, we talk about how to compute $\Delta(u|\psi)$ efficiently, the main idea is shown in Algorithm \ref{a2}.

\begin{algorithm}[!t]
	\caption{\textbf{Compute $(G,f,\vec{\theta},\vec{R},k,\psi,u)$}}\label{a2}
	\begin{algorithmic}[1]
		\STATE Initialize: list $L=[]$, set $S=\emptyset$
		\STATE $L=L+\{u:0\}$
		\STATE $S=S\cup\{u\}$
		\STATE $\Delta(u|\psi)= 0$
		\FOR {$i=1$ to $k$}
		\STATE Initialize: map $V_i=\{\}$
		\FOR {$v$ in $L[i-1]$}
		\FOR {$v'$ in $N(v)$}
		\IF {$v'\notin S$} 		
		\STATE $V_i[v']=0$
		\STATE $S=S\cup\{v'\}$
		\ENDIF
		\ENDFOR
		\ENDFOR
		\STATE $L=L+V_i$
		\ENDFOR
		\STATE $L[0][u]=\theta_u$
		\STATE $\Delta(u|\psi)+=\theta_u\cdot(\vec{R}[0]-\vec{R}[h(u|\psi)])$
		\FOR {$i=1$ to $k$}
		\FOR {$v$ in $L[i]$}
		\STATE Initialize: $t=1$
		\FOR {$v'$ in $N(v)$}
		\IF {$v'$ in $L[i-1]$}
		\IF {$\{v',v\}\in dom(\psi)$ and $\psi(\{v',v\})==1$}
		\STATE $t=t*(1-L[i-1][v'])$
		\ELSIF {$\{v',v\}\notin dom(\psi)$}
		\STATE $t=t*(1-p_{vv'}\cdot L[i-1][v'])$
		\ENDIF
		\ENDIF
		\ENDFOR
		\STATE $L[i][v]=1-t$
		\ENDFOR
		\FOR {$v$ in $L[i]$}
		\IF {$i<h(v|\psi)$}
		\STATE $\Delta(u|\psi)+=L[i][v]\cdot(\vec{R}[i]-\vec{R}[h(v|\psi)])$
		\ENDIF
		\ENDFOR
		\ENDFOR
		\RETURN $\Delta(u|\psi)$
	\end{algorithmic}
\end{algorithm}

For each user $u$, we assume there is a map containing a user $\{u:a\}$, where $a\in[0,1]$ is a probability. First, we initialize a list $L$, whose elements are maps, whose elements are users and their probabilities. For example, we set $k=2$, and we have $L=[\{u_1:a_1\},\{u_2:a_2,u_3:a_3\},\{u_4:a_4,u_5:a_5\}]$. We say $L[0]=\{u_1:a_1\}$ is initiator and its acceptance probability, $L[1]=\{u_2:a_2,u_3:a_3\}$ are 1-hop participants and their probabilities that they participate in the game initiated by user $u_1$ and $L[2]$ are 2-hop participants and their probabilities to that. In line $5$ to line $16$ of Algorithm \ref{a2}, we find all possible participants within k-hop from the initiator and put them in the correct position in list $L$, namely, 0-hop participant in $L[0]$, 1-hop participants in $L[1]$ and so on. The initial probabilities are set to $0$. Then, we define function $h(u|\psi)$, that is
\begin{equation*}
h(u|\psi)=
\begin{cases}
i&u \text{ is }\text{i-hop participant under }\psi\\
\infty&u \text{ is not participant}
\end{cases}
\end{equation*}
Here, we define $\vec{R}[\infty]=0$. In line 17 to 18 of Algorithm \ref{a2}, we set the acceptance probability to user $u$ as an initiator, and then, update $\Delta(u|\psi)$. Beginning from line $19$, for all possible i-hop participants, we set their participation probability from line 20 to 32. To understand the idea of this part, let us see following example:
\begin{exmp}
	We assume user $\{u:a\}\in L[2]$, $a=0$, is a possible $2$-hop participant to initiator $u$, and there are three users $\{x_1:a_1\}$, $\{x_2:a_2\}$, $\{x_3:a_3\}\in L[1]$ that existing edges $\{x_1,u\}$, $\{x_2,u\}$, $\{x_3,u\}\in E(G)$. If $\{x_1,u\}\in dom(\psi)$, $\psi(\{x_1,u\})=1$; $\{x_2,u\}\in dom(\psi)$, $\psi(\{x_2,u\})=0$ and $\{x_3,u\}\notin dom(\psi)$, we can update $a$ as $a=1-(1-a_1)(1-0)(1-p_{ux_3}\cdot a_3)$.
\end{exmp}
\noindent
Then, from line 33 to 37, it aims to compute the revenue gain for all possible i-hop participants. For user $v$, $\{v:a\}\in L[i]$, if $i<h(v|\psi)$, we can obtain the gain $a\cdot(\vec{R}[i]-\vec{R}[h(v|\psi)])$ according to the assumption in Section $3.3$, if $i\geq h(v|\psi)$, no gain. The algorithm is completed.

It is worth noting that this computational method is not absolutely accurate because of the complexity of A-RMKCG problem. Consider the targeted network $G=\{e_1=\{v_1,v_2\},e_2=\{v_2,v_3\},e_3=\{v_1,v_3\}\}$, $\theta_{v_1}=1$, $k=2$, the expected revenue from user $v_2$ when the company sends an inviation to $v_1$, that is,
\begin{flalign}
&=(\Pr[\Phi(\{e_1\})=1])\vec{R}[1]\nonumber\\
&+(\Pr[\Phi(\{e_1\})=0]\Pr[\Phi(\{e_2\})=1]\Pr[\Phi(\{e_3\})=1])\vec{R}[2]\nonumber
\end{flalign}
We neglect the second term in Algorithm \ref{a2}. In the most cases, the value of the second term is much less than that of the first term. If necessary to calculate $\Delta(u|\psi)$ accurately, we are still required to use Monte-Carlo simulations. Obviously, the more times the simulation is performed, the more accurate the target value and the longer the running time is. Algorithm \ref{a2} improves operational efficiency greatly under the premise of ensuring accuracy.

\subsection{Time Complexity}
Considering the line 4 of Algorithm \ref{a1}, to compute $\Delta(u|\psi)$ based on Monte-Carlo simulations, the running time is bounded by $O(mr)$. Thus, the total running time is $O(bnmr)$ where $r$ is the number of Monte-Carlo simulations. With the help of the simplified gain computation, shown in Algorithm \ref{a2}, let us first analyze the running time to compute $\Delta(u|\psi)$ by Algorithm \ref{a2}. Here, the running time is bound by $O(n+m)$. Thus, the total running time is $O(bn(n+m))$. As we know, to estimate $\Delta(u|\psi)$ accurately, the value of $r$ is very large. The time complexity of Adaptive-Invitation algorithm is improved tremendously if computing $\Delta(u|\psi)$ by Algorithm \ref{a2} instead of Monte-Carlo simulations.

\section{Theoretical Analysis}
In this section, we talk about the hardness and adaptive submodularity of A-RMKCG problem, and get an approximation ratio of Algorithm \ref{a1}.
\subsection{Hardness}
In order to show the hardness of A-RMKCG problem, we can start from a classical NP-hard problem, Maximum Coverage (MC) problem, and reduce MC to A-RMKCG problem in polynomial time. The decision version of MC can be defined as follows:
\begin{defn}[MC]
	Given an integer $b$, a collection of sets $S=\{S_1,S_2,...,S_m\}$ and an integer $Q$, we ask whether it exists a subcollection $S'\subseteq S$ such that $|S'|\leq b$ and the number of covered elements $|\bigcup_{S_i\in S'}S_i|\geq Q$.
\end{defn}

\begin{thm}
	A-RMKCG problem is NP-hard, because its special case can be reduced to MC in polynomial time.
\end{thm}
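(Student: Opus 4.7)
The plan is a standard many-one reduction from the NP-hard Maximum Coverage problem into a restricted special case of A-RMKCG. Given an MC instance with ground set $U$, collection $\mathcal{S}=\{S_1,\dots,S_m\}$, budget $b$, and threshold $Q$, I would construct the A-RMKCG instance as follows. Take a bipartite graph $G=(V_{\mathcal{S}}\cup V_U,E)$ where $V_{\mathcal{S}}=\{v_1,\dots,v_m\}$ represents the sets and $V_U=\{u_1,\dots,u_{|U|}\}$ represents the elements, placing an edge $\{v_i,u_j\}$ iff $j\in S_i$. Set $p_e=1$ for every edge, $\theta_{v_i}=1$ for every set-node, and $\theta_{u_j}=0$ for every element-node. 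Choose the hop parameter $k=1$ and the revenue vector $\vec{R}=(0,1)$, and keep the same budget $b$. The construction is obviously polynomial in $m+|U|$.

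The next step is to verify that under this construction the A-RMKCG objective collapses to exactly the coverage objective. Since every $\theta_{v_i}$ and every $p_e$ equals $1$, all randomness is suppressed: a policy $\pi$ is effectively a deterministic choice of $b$ invited nodes. Inviting an element-node $u_j$ yields zero revenue (since $\theta_{u_j}=0$ forces $\phi(u_j)=0$, so $u_j\notin D_0$ and contributes nothing), hence without loss of generality any optimal policy invites only vertices in $V_{\mathcal{S}}$. For a chosen set of initiators $D\subseteq V_{\mathcal{S}}$ we get $D_0(\pi,\phi)=D$, and because the graph is bipartite and $k=1$,
\begin{equation*}
D_1(\pi,\phi)=\Bigl(\bigcup_{v_i\in D}N(v_i)\Bigr)\setminus D=\bigcup_{v_i\in D}N(v_i).
\end{equation*}
The Tendency Assumption guarantees each element-node is counted at most once, so
\begin{equation*}
f(H(\pi,\phi),\phi)=R_0|D_0|+R_1|D_1|=\Bigl|\bigcup_{S_i\in D}S_i\Bigr|,
\end{equation*}
and $f_{\mathrm{avg}}(\pi)$ coincides with this deterministic value.

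From here the equivalence is immediate: the constructed A-RMKCG instance admits a policy $\pi$ with $f_{\mathrm{avg}}(\pi)\ge Q$ and $|H(\pi,\phi)|\le b$ iff the original MC instance has a subcollection $\mathcal{S}'\subseteq\mathcal{S}$ with $|\mathcal{S}'|\le b$ covering at least $Q$ elements. Since MC is NP-hard, so is A-RMKCG, which is what the theorem asserts.

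The main subtlety I expect is the interplay between the adaptive/stochastic formalism and the deterministic special case. One must explicitly argue that because every acceptance and edge probability is extremal (0 or 1), the random realization $\Phi$ is a point mass, every partial realization is already the full realization after the first observation of each item, and hence the space of adaptive policies collapses onto the space of size-$b$ subsets of $V_{\mathcal{S}}$. Once that equivalence is clearly written out by unrolling the definitions of $D_i(\pi,\phi)$ and $f_{\mathrm{avg}}(\pi)$, the rest of the argument is purely bookkeeping.
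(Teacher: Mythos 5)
Your reduction is essentially the paper's: the same bipartite incidence graph (set-nodes adjacent to the element-nodes they contain), $k=1$, and all edge probabilities forced to $1$ so the instance is deterministic. The one genuine problem is your choice of revenue vector $\vec{R}=(0,1)$. The model explicitly requires $R_i\in\mathbb{Z}^+$ and, more importantly, $R_0\geq R_1\geq\dots\geq R_k$ (diminishing per-participant revenue is a stated modeling assumption, and Lemma~3's submodularity argument leans on it). An instance with $R_0=0<R_1=1$ therefore lies outside the parameter space of A-RMKCG, so as written you have only shown hardness of a generalization of the problem, not of the problem itself. The fix is what the paper does: take $\vec{R}=(1,1)$ and shift the decision threshold to $Q'=Q+b$, so that a policy inviting $t\leq b$ set-nodes earns $t+\bigl|\bigcup_{S_i\in D}S_i\bigr|$; the forward direction pads $S'$ up to exactly $b$ sets, and the reverse direction uses $f_{avg}(\pi)\geq Q+b$ together with $t\leq b$ to conclude coverage at least $Q$.

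Apart from that, your construction is actually a little cleaner than the paper's in one respect: by setting $\theta_{u_j}=0$ on element-nodes you make inviting an element-node worthless, so optimal policies automatically restrict to $V_{\mathcal{S}}$. The paper instead keeps $\theta\equiv 1$ everywhere and needs an explicit exchange argument (replace a selected element-node $v_j$ by an adjacent set-node $v_i'$ without decreasing revenue) to handle policies that invite element-nodes. Your observation that all randomness collapses to a point mass, so adaptive policies degenerate to size-$b$ subsets, is the right thing to say and matches the paper's implicit use of ``the realization is unique.'' With the revenue vector repaired as above, your argument goes through.
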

\begin{proof}
	The decision version of A-RMKCG problem: Given an interger $b$, a targeted network $G=(V,E)$ and an integer $Q'$, we ask whether it exists a policy $\pi$ and $|H(\pi,\phi)|\leq b$ for all $\phi$ such that $f_{avg}(\pi)\geq Q'$. To get a special case, we make some assumptions about the original problem: We set $k=1$, initiator only send invitation to his/her directed friends; the revenue from each participant is $1$, revenue vector $\vec{R}=(1,1)$; the acceptance probability for each user is $1$, acceptance vector $\vec{\theta}=(1,1,1,...,1)$; and the probability $p_e$ for each edge is $1$, the realization is unique.
	
	Let us construct the equivalent relation between MC and A-RMKCG. Given an instance of MC, we can define an instance of A-RMKCG as: $W=\bigcup_{S_i\in S}S_i$, for each node $w_j\in W$, we create a node $v_j$ in the instance of A-RMKCG. For each set $S_i\in S$, we create a node $v_i'$ in the instance of A-RMKCG. Thus, we have $V(G)=\{v_1,v_2,...,v_{|W|}\}\cup\{v_1',v_2',...,v_{|S|}'\}$. For each node $v_j\in S_i$, we create an undirected edge between $v_j$ and $v_i'$, thus, $E(G)=\bigcup_{i=1}^{|S|}\left(\bigcup_{j=1}^{|S_i|}\{v_i',v_j\}\right)$. The construction can be done in polynomial time and shown in Fig. \ref{fig3}. Based on the above assumptions, all users in contructed graph $G$ would accepted the invitations if they receive it. Let $Q'=Q+b$, we have:
	
	MC$\Rightarrow$A-RMKCG: Given an instance of MC, $S'\subseteq S$, $|S'|\leq b$ and $|\bigcup_{S_i\in S'}S_i|\geq Q$, as an instance of A-RMKCG, we make the policy $\pi$ invite the users in $D=\{v_i'|S_i\in S'\}$ in any order. We can know that $|H(\pi,\phi)|\leq b$ for all $\phi$, $\phi$ is unique, and $f_{avg}(\pi)\geq Q+b=Q'$.
	
	A-RMKCG$\Rightarrow$MC: Given an instance of A-RMKCG, a policy $\pi$ such that $|H(\pi,\phi)|\leq b$ and $f_{avg}(\pi)\geq Q+b$, we have following observation. First, in the constructed network $G$, if $\pi$ selects a node $v_j$, $v_j\in\{v_1,v_2,...,v_{|W|}\}$, and it does not select a $v_i'$, $v_i'\in \{v_1',v_2',...,v_{|S|}'\}$, that connects to $v_j$, then we assume there is another policy $\kappa$ that selects one of $v_i'$ that connects to $v_j$ instead of $v_j$. Obviously, we have $f_{avg}(\kappa)\geq f_{avg}(\pi)$ because the number of neighbors of $v_i'$ is greater or at least equal to that of $v_j$, thus, we can make more users participate in this game when inviting $v_i'$ instead of $v_j$. As an instance of MC, we select the subcollection $S'=\{S_i|v_i'\in H(\kappa,\phi)\}$. We can know that $|S'|\leq b$ and $|\bigcup_{S_i\in S'}S_i|\geq f_{avg}(\kappa)-b\geq f_{avg}(\pi)-b\geq Q$.
	
	Therefore, the simplest special case of A-RMKCG can be reduced to MC problem, which means that it has at least the same hardness as MC, A-RMKCG problem is NP-hard. The proof is completed.
\end{proof}

\begin{figure}[!t]
	\centering
	\includegraphics[width=3.5in]{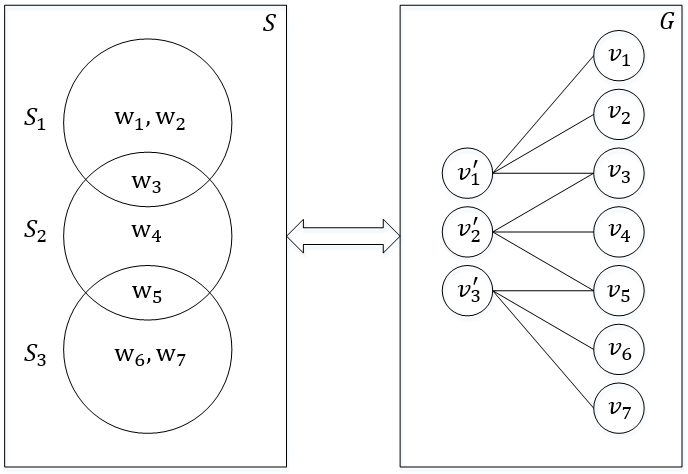}
	\caption{A sketch that shows an example: reduction between MC and A-RMKCG problem.}
	\label{fig3}
\end{figure}

\subsection{Approximation Performance}
In \cite{golovin2011adaptive}, Golovin et al. proposed two important concepts: Adaptive Monitonicity and Adaptive Submodularity, and prove we can obtain a $(1-1/e)$-approximate solution by adaptive greedy strategy if the adaptive optimization problem satisfies these two properties. Our A-RMKCG problem is an instance of adaptive optimization problem, thus, supposing our objective function, Equation (4), is adaptive monotone and adaptive submodular, we can obtain a similar result. Based on \cite{golovin2011adaptive}, these two concepts are as follows:
\begin{defn}[Adaptive Monotonicity]
	A function $f$ is adaptive monotone with respect to distribution $\Pr(\phi)$ if the conditional expected marginal revenue of any user $u$ is nonnegative, i.e., for all $\psi$ with $\Pr[\Phi\sim\psi]>0$ and all $u\in V(G)\backslash dx(\psi)$, we have
	\begin{equation}
	\Delta(u|\psi)\geq 0
	\end{equation}
\end{defn}
\begin{defn}[Adaptive Submodularity]
	A function $f$ is adaptive submodular with respect to distribution $\Pr(\phi)$ if the conditional expected marginal revenue of any user $u$ does not increase as more states of users and edges are observed, i.e., for all $\psi$ and $\psi'$ such that $\psi$ is a subrealization of $\psi'$ ($\psi\subseteq \psi'$) and all $u\in V(G)\backslash dx(\psi')$, we have
	\begin{equation}
	\Delta(u|\psi)\geq\Delta(u|\psi')
	\end{equation}
\end{defn}
Then, we can obtain our theoretical results, which is described as follows:
\begin{lem}
	The objective function $f$ of the A-RMKCG problem is adaptive monotone.
\end{lem}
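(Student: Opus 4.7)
The plan is to prove adaptive monotonicity pointwise, realization by realization, and then take expectations. Fix an arbitrary partial realization $\psi$ with $\Pr[\Phi\sim\psi]>0$ and a user $u\in V(G)\setminus dx(\psi)$. By definition,
\begin{equation*}
\Delta(u|\psi)=\mathbb{E}\bigl[f(dx(\psi)\cup\{u\},\Phi)-f(dx(\psi),\Phi)\,\big|\,\Phi\sim\psi\bigr],
\end{equation*}
so it suffices to show that the integrand is nonnegative for every $\phi\sim\psi$. I would then argue separately about the two cases $\phi(u)=0$ and $\phi(u)=1$.

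When $\phi(u)=0$, user $u$ rejects the invitation, so it contributes no initiator and triggers no diffusion. Hence $D_i(\cdot,\phi)$ is unchanged when $u$ is added to the invited set, and the pointwise difference is exactly $0$. The interesting case is $\phi(u)=1$. Here the key observation to formalize is that, under realization $\phi$, adding $u$ as an initiator can only shorten (or leave unchanged) the hop-distance from any target user $v$ to its nearest accepting initiator along edges with $\phi(\{x,y\})=1$. Let me write $d(v;S,\phi)\in\{0,1,\dots,k,\infty\}$ for this minimum hop distance when the invited set is $S$, with the convention that $d(v;S,\phi)=\infty$ whenever $v$ is not reachable within $k$ hops, and $R_\infty:=0$. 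Monotonicity of the set operation yields
\begin{equation*}
d\bigl(v;\,dx(\psi)\cup\{u\},\,\phi\bigr)\;\leq\;d\bigl(v;\,dx(\psi),\,\phi\bigr),\qquad\forall v\in V(G).
\end{equation*}

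Next I would invoke the Tendency Assumption to rewrite the objective as $f(S,\phi)=\sum_{v\in V(G)}R_{d(v;S,\phi)}$, which is well defined because each user is charged at exactly the smallest hop-level that reaches it. Since $\vec{R}$ is non-increasing, i.e. $R_0\geq R_1\geq\cdots\geq R_k\geq R_\infty=0$, the displayed inequality above forces $R_{d(v;\,dx(\psi)\cup\{u\},\phi)}\geq R_{d(v;\,dx(\psi),\phi)}$ for every $v$. Summing gives $f(dx(\psi)\cup\{u\},\phi)-f(dx(\psi),\phi)\geq 0$ for every $\phi\sim\psi$. Taking the conditional expectation then yields $\Delta(u|\psi)\geq 0$, establishing adaptive monotonicity.

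I expect the main subtlety to be bookkeeping rather than a conceptual obstacle: I need to check carefully that the reformulation $f(S,\phi)=\sum_v R_{d(v;S,\phi)}$ is consistent with the definitions in equations (1)--(3), in particular that a user can be an initiator and also have $d(v;S,\phi)=0$, so that the two ways of writing the revenue match. A brief case-split verifying that a user in $D_i(\pi,\phi)$ is precisely a user with $d(v;H(\pi,\phi),\phi)=i$ under the Tendency Assumption will close this gap, after which the monotonicity argument is purely an inequality on realizations.
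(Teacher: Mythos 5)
Your proposal is correct and follows essentially the same route as the paper's proof: a case split on whether $u$ accepts or rejects, a pointwise nonnegativity argument for each realization $\phi\sim\psi$, and then taking the conditional expectation. Your version is in fact more careful than the paper's, since the reformulation $f(S,\phi)=\sum_v R_{d(v;S,\phi)}$ together with the monotonicity of $d$ in $S$ makes explicit why no user's contribution can decrease when $u$ is added -- a point the paper's one-line claim that the gain is ``at least $\vec{R}[0]-\vec{R}[h(u|\psi)]$'' leaves implicit.
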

\begin{proof}
	Considering a fixed observation $\psi$, for a user $u\in V(G)\backslash dx(\psi)$, if sending an invitation to $u$, there are two situations that can happen: accept or reject. If user $u$ accept to be an initiator, the value of marginal gain is at least equal to $\vec{R}[0]-\vec{R}[h(u|\psi)]$; otherwise, there is no marginal gain. the marginal gain is nonnegative in both cases. Thus, under any realization $\phi$, we have $f(dx(\psi)\cup\{u\},\phi)\geq f(dx(\psi),\phi)$. The expected marginal gain $\Delta(u|\psi)$ is the linear combination of each realization, so $\Delta(u|\psi)\geq 0$ as well.
\end{proof}

\begin{lem}
	The objective function $f$ of the A-RMKCG problem is not adaptive submodular.
\end{lem}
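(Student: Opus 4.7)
The plan is to disprove adaptive submodularity by exhibiting an explicit counterexample: a small targeted network together with two partial realizations $\psi \subseteq \psi'$ and a candidate user $u$ for which $\Delta(u|\psi) < \Delta(u|\psi')$, in direct violation of inequality (6).

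The construction I have in mind is a path on three vertices, $v_1 - v_2 - v_3$, with $k = 1$, $\theta_v = 1$ for every $v$, and edge probabilities $p_{\{v_1,v_2\}} = p_{\{v_2,v_3\}} = 1/2$. Take $\psi$ to be the partial realization in which only $v_1$ has been invited and has accepted (so $dx(\psi) = \{v_1\}$ and no edges are observed yet), and take $\psi'$ to extend $\psi$ by additionally observing the edge $\{v_1, v_2\}$ to be in state $0$. Both are consistent, have positive probability under $\Pr(\Phi)$, and clearly satisfy $\psi \subseteq \psi'$; also $v_3 \notin dx(\psi')$, so $\Delta(v_3|\psi)$ and $\Delta(v_3|\psi')$ are both well-defined. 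I would then compute the two quantities directly from equation (1) together with the Tendency Assumption. Under $\psi$, $v_2$ is already a 1-hop participant of $v_1$ with probability $1/2$, in which case inviting $v_3$ earns no extra credit for $v_2$; a short case analysis gives $\Delta(v_3|\psi) = R_0 + R_1/4$. Under $\psi'$, however, the edge $\{v_1, v_2\}$ is known to have failed, so $v_2$ is currently unclaimed, and inviting $v_3$ captures $v_2$ as a 1-hop participant whenever $\{v_2, v_3\}$ succeeds, giving $\Delta(v_3|\psi') = R_0 + R_1/2$.

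Since $R_1 > 0$, this yields $\Delta(v_3|\psi) = R_0 + R_1/4 < R_0 + R_1/2 = \Delta(v_3|\psi')$, contradicting the adaptive submodularity condition. The conceptual point behind the counterexample is that observing an edge to be in state $0$ \emph{removes} a partially claimed user from the current tally, which makes any future initiator who can still reach that user strictly more valuable rather than less; this is the opposite of the usual diminishing-returns behavior and is the root cause of the failure of adaptive submodularity for A-RMKCG. The main obstacle here is not conceptual but clerical: one has to be careful about (i) verifying that $\psi$ and $\psi'$ are genuine consistent partial realizations with positive probability and that $\psi \subseteq \psi'$ in the formal sense of the paper, and (ii) applying the Tendency Assumption correctly when a user is simultaneously reachable from several initiators at the same hop distance, so that the marginal contributions are tallied without double counting.
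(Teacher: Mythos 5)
Your arithmetic is fine, but the counterexample rests on a partial realization that the model does not admit, and this is a genuine gap rather than a cosmetic one. Your $\psi$ has $v_1$ invited and accepted while the edge $\{v_1,v_2\}$ remains unobserved. Under the paper's full-adoption feedback model (step (5) of the adaptive process, and the statement that once an initiator accepts, the states of all edges within $k$ hops of him/her are revealed), this observation state can never occur: the moment $v_1$ accepts, $\{v_1,v_2\}$ is revealed as $0$ or $1$. A ``partial realization'' in this framework is meant to be a reachable observation state of the policy, and the entire force of your example comes from $v_2$ being only \emph{fractionally} claimed under $\psi$ (a $1/2$ chance of already being a 1-hop participant), which is precisely the situation the feedback model rules out. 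The clearest symptom that something is off: your example has $k=1$, but Lemma~3 of the paper proves that $f$ \emph{is} adaptive submodular whenever $k\leq 1$, so your construction would simultaneously refute that lemma. Indeed, if you replace your $\psi$ by the reachable alternatives on the same path --- either $\psi=\emptyset$, or $v_1$ accepted with $\{v_1,v_2\}$ observed as $1$ --- you get $\Delta(v_3|\psi)=R_0+R_1/2$ or $R_0$ respectively, and in both cases $\Delta(v_3|\psi)\geq\Delta(v_3|\psi')$; the violation disappears.

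The paper's own counterexample avoids this trap by taking $k=2$ and two genuinely reachable observations: the star $\{v_0,v_1\},\{v_1,v_2\},\{v_1,v_3\}$ with all edge probabilities $0.1$, $\psi_1=\emptyset$, and $\psi_2$ the state after $v_0$ accepts and all three edges are observed live. There the mechanism is the opposite of yours: observing that the edges out of $v_1$ are \emph{live} makes inviting $v_1$ more valuable, because it then upgrades $v_2,v_3$ from 2-hop to 1-hop participants with certainty (gain $R_1-R_2$ each), whereas under $\psi_1$ those nodes are only reached with probability $0.1$. With $\vec{R}=(5,3,1)$ this gives $2.95<3$. Note that this mechanism needs $k\geq 2$ --- an initiator must be able to ``promote'' nodes two hops away along edges whose liveness was revealed by an earlier initiator --- which is exactly why the paper can still salvage adaptive submodularity for $k\leq 1$. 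To repair your proof you should move to $k=2$ and restrict yourself to partial realizations that arise from the full-adoption feedback.
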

\begin{proof}
	Considering an example that targeted graph is $G=\{\{v_0,v_1\},\{v_1,v_2\},\{v_1,v_3\}\}$, $k=2$,  $p_{v_0v_1}=0.1$, $p_{v_1v_2}=0.1$, $p_{v_1v_3}=0.1$, we have two partial realization $\psi_1=\emptyset$ and $\psi_2=\{v_0,\{v_0,v_1\},\{v_1,v_2\},\{v_1,v_3\}\}$. $\psi_1$ means they do not receive any invitation from company, and $\psi_2$ means $v_0$ accepts to be an initiator and observes that $\{v_0,v_1\}$, $\{v_1,v_2\}$ and $\{v_1,v_3\}$ exist. Clearly, $\psi_1\subseteq\psi_2$. Relied on Definition 1, $\Delta(v_1|\psi_1)=0.5\cdot(R_0+3\times0.1\times R_1)$ because of $\mathbb{E}[\theta_{v_1}]=0.5$. However, $\Delta(v_1|\psi_2)=0.5\cdot[(R_0-R_1)+2\times(R_1-R_2)]$. Here, assume $\vec{R}=(5,3,1)$, we have $\Delta(v_1|\psi_1)=2.95$ and $\Delta(v_1|\psi_2)=3$. Thus, $\Delta(v_3|\psi_1)<\Delta(v_3|\psi_2)$, $f$ is not adaptive submodular.
\end{proof}
\noindent
Even though that, in some special cases, it is adaptive submodular unexpectedly. Let us see
\begin{rem}
	Given two partial realization that $\psi\subseteq \psi'$, we have $h(v|\psi)\geq h(v|\psi')$ for each user $v$ due to the fact that tendency assumption shown as before.
\end{rem}

\begin{lem}
	If our A-RMKCG problem conforms one of following two special cases:
	\begin{enumerate}
		\item $k\leq 1$
		\item For all $\{u,v\}\in E(G)$, $p_{uv}=1$
	\end{enumerate}
	The objective function $f$ is adaptive submodular.
\end{lem}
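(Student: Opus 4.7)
The plan is to handle the two special cases separately, since each reduces to a different structural fact about the objective.

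\textbf{Case $p_{uv}=1$ for every edge.} Here every edge has deterministic state $1$, so after conditioning on $\Phi\sim\psi$ the only remaining randomness is whether a newly invited user accepts. Writing $D_0(\psi)=\{w\in dx(\psi):\psi(w)=1\}$ for the set of already-confirmed initiators, the marginal reduces to
\begin{equation*}
\Delta(u|\psi)=\theta_u\cdot\bigl[F(D_0(\psi)\cup\{u\})-F(D_0(\psi))\bigr],
\end{equation*}
where $F(S)=\sum_{v\in V(G)}R_{h_G(v,S)}$ and $h_G(v,S)$ is the graph distance from $v$ to $S$, truncated to $\infty$ once it exceeds $k$ (with $R_\infty=0$). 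I would then verify that each term $S\mapsto R_{h_G(v,S)}$ is monotone non-decreasing and submodular as an ordinary set function: its marginal when adding $w$ equals $\max\{0,R_{d_G(v,w)}-R_{h_G(v,S)}\}$ if $d_G(v,w)\leq k$ and $0$ otherwise, and this quantity is weakly decreasing as $h_G(v,S)$ shrinks because $R_0\geq\cdots\geq R_k\geq R_\infty$. Summing over $v$ gives submodularity of $F$. Since $\psi\subseteq\psi'$ forces $D_0(\psi)\subseteq D_0(\psi')$ and $u\notin dx(\psi')\supseteq D_0(\psi')$, the desired inequality $\Delta(u|\psi)\geq\Delta(u|\psi')$ is immediate.

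\textbf{Case $k\leq 1$.} For $k=0$, $\Delta(u|\psi)=\theta_u R_0$ is constant in $\psi$, and the inequality is trivial. For $k=1$ I would use the accounting from Algorithm~\ref{a2} to write
\begin{equation*}
\Delta(u|\psi)=\theta_u\Bigl(R_0-R_{h(u|\psi)}+R_1\!\!\!\sum_{\substack{v\in N(u)\\ h(v|\psi)=\infty}}\!\!\!q^{\psi}_{uv}\Bigr),
\end{equation*}
with $q^{\psi}_{uv}=p_{uv}$ when $\{u,v\}\notin dom(\psi)$ and $q^{\psi}_{uv}=\psi(\{u,v\})\in\{0,1\}$ otherwise. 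I would then compare term by term with $\Delta(u|\psi')$: (i) $R_0-R_{h(u|\cdot)}$ is non-increasing in $\psi$ by the Remark preceding the lemma, because $h(u|\psi')\leq h(u|\psi)$ and $R$ is non-increasing; (ii) the index set $\{v\in N(u):h(v|\psi')=\infty\}$ is contained in $\{v\in N(u):h(v|\psi)=\infty\}$, so every $v$ that drops out removes a non-negative term; (iii) for a $v$ that survives under both partial realizations, I need $q^{\psi}_{uv}\geq q^{\psi'}_{uv}$.

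The main obstacle is point (iii): a priori, a newly observed $\psi'(\{u,v\})=1$ would push $q^{\psi'}_{uv}$ up to $1$, possibly exceeding $q^{\psi}_{uv}=p_{uv}$. The structural fact that saves us is specific to $k=1$: an edge $\{u,v\}$ enters $dom(\psi')$ only by lying on the immediate outgoing frontier of an accepted initiator, which for $k=1$ forces the edge to be incident to that initiator. Since $u\notin dx(\psi')$ by hypothesis, the initiator must be $v$ itself, so $\psi'(v)=1$ and therefore $h(v|\psi')=0$, contradicting the membership condition $h(v|\psi')=\infty$ of (iii). Hence the case $\psi'(\{u,v\})=1$ never arises for surviving $v$'s, and the remaining possibilities $q^{\psi'}_{uv}\in\{0,p_{uv}\}$ both satisfy $q^{\psi'}_{uv}\leq q^{\psi}_{uv}$. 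Combining (i)--(iii) with $\theta_u\geq 0$ and $R_1\geq 0$ gives $\Delta(u|\psi)\geq\Delta(u|\psi')$ and closes the argument.
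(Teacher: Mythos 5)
Your proof is correct, but it takes a genuinely different route from the paper's. The paper argues by a coupling over realizations: it pairs each $\phi\sim\psi$ with a $\phi'\sim\psi'$ that agrees with it outside $dom(\psi')$, proves the pointwise inequality $\Delta(u|\psi,\phi\sim\psi)\geq\Delta(u|\psi',\phi'\sim\psi')$ by a case analysis on whether each edge incident to $u$ lies in $dom(\alpha)$ or in $dom(\psi'\backslash\psi)$, and then averages over the conditional distributions. You instead compute the conditional expected marginal in closed form and compare term by term. In the $p_{uv}=1$ case your reduction to ordinary submodularity of the deterministic set function $F(S)=\sum_v R_{h_G(v,S)}$ is cleaner than the paper's one-line appeal to the tendency remark, and it makes transparent that this case degenerates to a weighted-coverage-type objective. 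In the $k=1$ case your point (iii) isolates precisely the step the paper handles opaquely (its assertion that $\psi'(\{u,v\})=1$ yields no gain under $\psi'$): you note that for $k=1$ an edge can enter $dom(\psi')$ only by being incident to an accepted initiator, which, since $u\notin dx(\psi')$, must be $v$ itself, forcing $h(v|\psi')=0$ and excluding $v$ from the sum; this is exactly the structural fact that fails for $k\geq 2$ and produces the counterexample of the preceding lemma. The paper's coupling template is the more standard and more readily generalizable device for verifying adaptive submodularity under other feedback models, whereas your computation is more elementary, fully explicit, and arguably more convincing on the delicate $k=1$ case.
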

\begin{proof}
	In order to prove adaptive submodularity, we need to show, for any partial realization $\psi$, $\psi'$ such that $\psi\subseteq \psi'$, we have $\Delta(u|\psi)\geq\Delta(u|\psi')$. Similar to the proof of adaptive monotonicity, we consider two fix observation $\psi$, $\psi'$ such that $\psi\subseteq \psi'$ and a user $u\in V(G)\backslash dx(\psi')$. Given an observation $\psi$, we define the marginal gain under realization $\phi\sim\psi$ as $\Delta[(u|\psi)|\phi]$:
	\begin{equation}
	\Delta(u|\psi,\phi\sim\psi)=f(dx(\psi)\cup\{u\},\phi)-f(dx(\psi),\phi)
	\end{equation}
	Assuming that there are two realizations such that $\phi\sim\psi$ and $\phi'\sim\psi'$, we have $\phi(v)=\phi'(v)$ for all $v\notin dx(\psi')$ and $\phi(\{u,v\})=\phi'(\{u,v\})$ for all $\{u,v\}\notin dy(\psi')$, thus they have the same part $\alpha=\psi\cup(\phi'\backslash \psi')$. Now we can prove
	\begin{equation}
	\Delta(u|\psi,\phi\sim\psi)\geq\Delta(u|\psi',\phi'\sim\psi')
	\end{equation}
	for all $u\in dx(\psi')$. We will discuss the above two cases seperately as follows:
	\begin{enumerate}
		\item For the first case, when $k=0$, Equation (8) is established obvious because of Remark $1$. When $k=1$, if $u$ accepts the invitation, we have $\vec{R}[0]-\vec{R}[h(u|\psi)]\geq\vec{R}[0]-\vec{R}[h(u|\psi')]$, the gain of $u$ under $\psi$ is eqaul or larger than that under $\psi'$. For each user $v\in N(u)$, if $\{u,v\}\in dom(\alpha)$, the gain of $v$ under $\psi$ is eqaul or larger than that under $\psi'$ regardless $\alpha(\{u,v\})=1$ or $0$. If $\{u,v\}\in dom(\psi'\backslash\psi)$, when $\psi'(\{u,v\})=0$, there is no gain of $v$ under $\psi'$; when $\psi'(\{u,v\})=1$, $\psi'(v)=0$ definitely, there is no gain of $v$ under $\psi'$. Thus, whatever $\phi(\{u,v\})$ is, the gain of $v$ under $\psi$ is eqaul or larger than that under $\psi'$. If $u$ rejects the invitation, there is no gain both $\psi$ and $\psi'$. Integrating all situations, Equation (8) is established.
		\item For the second case, for all $\{u,v\}\in E(G)$, $p_{uv}=1$, the state of edges under realization $\phi$ and $\phi'$ are identical. $\psi\subseteq \psi'$ means that $dx(\psi)\subseteq dx(\psi')$, based on Remark 1, it is esay to get Equation (8).
	\end{enumerate}
	
	According to Eqaution (7), Equation (8) and Definition 1, we have as follows: $\Delta(u|\phi)=$
	\begin{flalign}
	&=\sum_{\phi\sim\psi}\Pr[\phi|\phi\sim\psi]\Delta(u|\psi,\phi\sim\psi)\nonumber\\
	&=\sum_{\phi'\sim\psi'}\Pr[\phi'|\phi'\sim\psi']\sum_{\phi\sim\alpha}\Pr[\phi|\phi\sim\alpha]\Delta(u|\psi,\phi\sim\psi)\nonumber
	\end{flalign}
	Since Equation (8) and $\sum_{\phi\sim\alpha}\Pr[\phi|\phi\sim\alpha]=1$,
	\begin{flalign}
	&\geq\sum_{\phi'\sim\psi'}\Pr[\phi'|\phi'\sim\psi']\sum_{\phi\sim\alpha}\Pr[\phi|\phi\sim\alpha]\Delta(u|\psi',\phi'\sim\psi')\nonumber\\
	&\geq\sum_{\phi'\sim\psi'}\Pr[\phi'|\phi'\sim\psi']\Delta(u|\psi',\phi'\sim\psi')\sum_{\phi\sim\alpha}\Pr[\phi|\phi\sim\alpha]\nonumber\\
	&=\sum_{\phi'\sim\psi'}\Pr[\phi'|\phi'\sim\psi']\Delta(u|\psi',\phi'\sim\psi')\nonumber\\
	&=\Delta(u|\phi')\nonumber
	\end{flalign}
	Therefore, $\Delta(u|\psi)\geq\Delta(u|\psi')$, the proof of adaptive submodularity is completed.
\end{proof}
\begin{rem}
	Even though the A-RMKCG problem is not adaptive submodular, we can know that the objective function $f$ is getting close to be adaptive submodular, when k becomes smaller or the probabilities of edges are approaching to $1$. Shown as the proof of Lemma 2, it is a typical non-submodular case, the likelihood such case happens is reduced gradually when $k$ becomes smaller or $p_{uv}\rightarrow 1$ for each $\{u,v\}\in E(G)$.
\end{rem}
\begin{thm}
	The adaptive policy $\pi_a$, given by Adaptive-Invitation algorithm, for our A-RMKCG problem is a $(1-1/e)$-approximate solution when $k\leq 1$ or $p_{uv}=1$ for each $\{u,v\}\in E(G)$. Hence, we have
	\begin{equation}
	f_{avg}(\pi_a)\geq(1-1/e)\cdot f_{avg}(\pi^*)
	\end{equation}
\end{thm}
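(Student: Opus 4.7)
The plan is to reduce the theorem to a direct application of the adaptive greedy framework of Golovin and Krause. Observe that in the two specified regimes we have already assembled all the ingredients: Lemma~1 gives adaptive monotonicity of $f$ with respect to the distribution $\Pr(\phi)$ in full generality, and Lemma~3 upgrades this to adaptive submodularity precisely when $k\leq 1$ or $p_{uv}=1$ for every $\{u,v\}\in E(G)$. The main result of \cite{golovin2011adaptive} states that whenever the objective is adaptive monotone and adaptive submodular with respect to $\Pr(\phi)$, the adaptive greedy policy subject to a cardinality constraint $b$ achieves expected value at least $(1-1/e)$ times that of the optimal policy of the same cardinality. So the task reduces to verifying that Algorithm~\ref{a1} \emph{is} this adaptive greedy policy for our instance.

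For that verification I would proceed as follows. First, note that Algorithm~\ref{a1} maintains a partial realization $\psi$ which records exactly the observations (states of invited users and, via the full-adoption feedback after an acceptance, the states of all edges within $k$-hop of accepted initiators). Second, at each of the $b$ outer iterations, line~6 selects $u_i\in\arg\max_u \Delta(u\mid\psi)$, where $\Delta(u\mid\psi)$ is the conditional expected marginal revenue defined in Definition~1 — this is precisely the selection rule of the adaptive greedy policy. Third, the cardinality constraint $|H(\pi_a,\phi)|\leq b$ holds deterministically, since the outer loop runs exactly $b$ times. With these three observations, the hypotheses of the Golovin--Krause theorem are met in each of the two special cases, and the inequality $f_{avg}(\pi_a)\geq (1-1/e)\cdot f_{avg}(\pi^*)$ follows immediately.

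The main obstacle I anticipate is a technical one about gain computation rather than a conceptual one about submodularity: the proof requires the greedy selection in line~6 to use the \emph{exact} conditional expected marginal revenue, whereas Algorithm~\ref{a2} is described in the excerpt as an efficient approximation that neglects certain lower-order terms. Consequently the statement of Theorem~2 should be interpreted for the idealized version of Algorithm~\ref{a1} in which $\Delta(u\mid\psi)$ is computed exactly (for instance via Monte-Carlo simulations with enough samples to make any error negligible, as is standard in influence-maximization analyses). Under that reading, no further estimates are needed, and the proof is essentially a one-line invocation of the preceding lemmas together with the adaptive greedy theorem, so the write-up can be kept very short.
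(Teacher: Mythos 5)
Your proposal is correct and matches the paper's own proof, which is exactly the one-line invocation you describe: Lemma~1 (adaptive monotonicity) plus Lemma~3 (adaptive submodularity in the two special cases) combined with the Golovin--Krause adaptive greedy guarantee. Your additional remarks about verifying that Algorithm~1 implements the adaptive greedy selection rule and about requiring exact computation of $\Delta(u\mid\psi)$ are sensible caveats the paper glosses over, but they do not change the argument.
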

\begin{proof}
	Based on Lemma 1 and Lemma $3$, the objective function $f$ is adaptive monotone and submodular, adaptive greedy policy is a $(1-1/e)$-approximation according to the conclusion of \cite{golovin2011adaptive}.
\end{proof}

\section{Experiment}
In this section, we need to validate the effectiveness and correctness of our proposed adaptive polices on several real social networks. The datasets in our experiments are from networkrepository.com \cite{nr}, which is an website of network repository. Three datasets with different size are used in our experiments. The dataset-1 is a co-authorship network, where each edge is a co-authorship among scientists in network theory and experiments. The dataset-2 is a Wiki network, which is a who-votes-on-whom network collected from Wikipedia. The dataset-3 is a social friendship network extracted from Facebook consisting of people with edges representing friendship ties. The statistics information of the three datasets is represented in table \ref{table1}.

\begin{table}[h]
	\renewcommand{\arraystretch}{1.3}
	\caption{The statistics of the graphs used in our experiments \cite{nr}}
	\label{table1}
	\centering
	\begin{tabular}{|c|c|c|c|c|}
		\hline
		\bfseries Dataset & \bfseries n & \bfseries m & \bfseries Type & \bfseries Avg. Degree\\
		\hline
		Dataset-1 & 0.4K & 1.01K & undirected & 4.00\\
		\hline
		Dataset-2 & 1.0K & 3.15K & undirected & 6.00\\
		\hline
		Dataset-3 & 3.0K & 65.2K & undirected & 48.0\\
		\hline
	\end{tabular}
\end{table}

\subsection{Experimental Settings}
As mentioned earlier, our proposed algorithms associated with A-RMKCG are based on the following parameters: Hop number $k$ (at most $k$-hop participants follow an initiator), Acceptance vector $\vec{\theta}$, Revenue vector $\vec{R}$, budget $b$ and edge probability. For each $e\in E(G)$, we set $p_e=0.5$ and for each user $u$, $\theta_u$ is uniformly distributed in $[0,1]$. There are two experiments we perform in this part: Algorithm \ref{a2} performance and Algorithm \ref{a1} performance. For the performance of Algorithm \ref{a2}, we have shown that the value of $\Delta(u|\psi)$ computed by Algorithm \ref{a2} is not precise completely when $k\geq 2$, and accurate value can be obtained by Monte-Carlo simulation. In this experiment, we aim to evaluate the effectiveness and efficiency of our Algorithm \ref{a2}, compared with Monte-Carlo simulation. We run the Monte-Carlo simulation $100$ times and take the average of them to compute the value of $\Delta(u|\psi)$.

For the performance of Algorithm \ref{a1}, in this experiment, we test our Algorithm \ref{a1} with some common heuristic algorithms and compare the results of performance. It aims to evaluate the effectiveness of our adaptive invitation strategy. This experiment can be divided into three parts: $k=1$, $k=2$ and $k=3$, and their revenue vector is set as $(8,6)$, $(8,6,4)$ and $(8,6,4,2)$, which means that the revenue of 0-hop participant is $8$ units, 1-hop is $6$ units, 2-hop is $4$ units and 3-hop is $2$ units. Our proposed algorithms are compared with some baseline algorithms:
\begin{enumerate}
	\item MaxDegree: Invite the user with maximum degree at each step within budget $b$.  
	\item Random: Invite a user randomly from $V(G)$ at each step within budget $b$. 
	\item MaxProb: Invite the user with maximum acceptance probability at each step within budget $b$.
	\item MaxDegreeProb: Invite the user with maximum product of degree multiplying acceptance probability at each step within budget $b$.
\end{enumerate}
We use python to test each algorithms. The simulation is run on a Windows machine with a 3.40GHz, 4 core Intel CPU and 16GB RAM.

\begin{figure}[!t]
	\centering
	\includegraphics[width=3.5in]{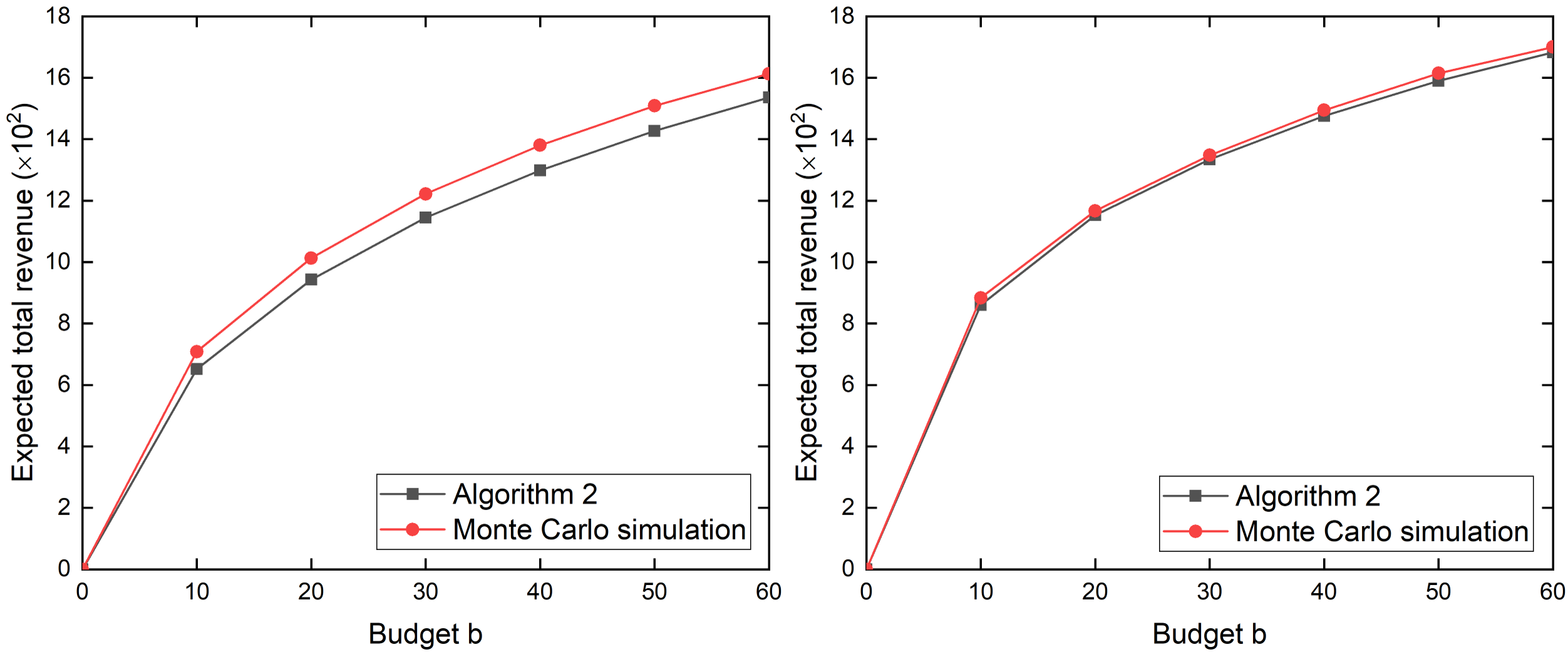}
	\caption{The performance of Adaptive-Invitation changes over budget $b$ under the dataset-1. To compute $\Delta(u|\psi)$, one is by Algorithm \ref{a2}, another one is by Monte-Carlo simulations. The left figure is under $k=2$ and right figure is under $k=3$.}
	\label{fig4}
\end{figure}

\begin{figure}[!t]
	\centering
	\includegraphics[width=3.5in]{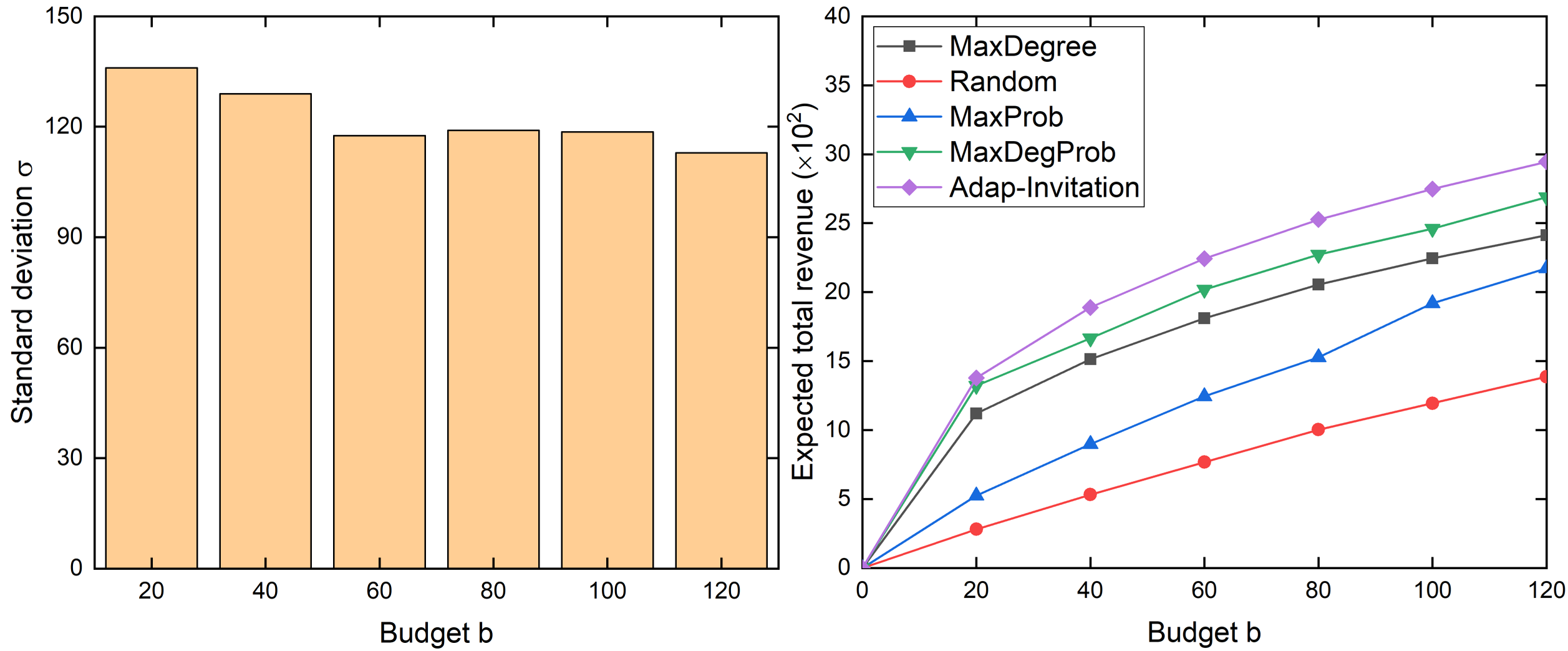}
	\\(a) $k=1$
	\\${}$
	\includegraphics[width=3.5in]{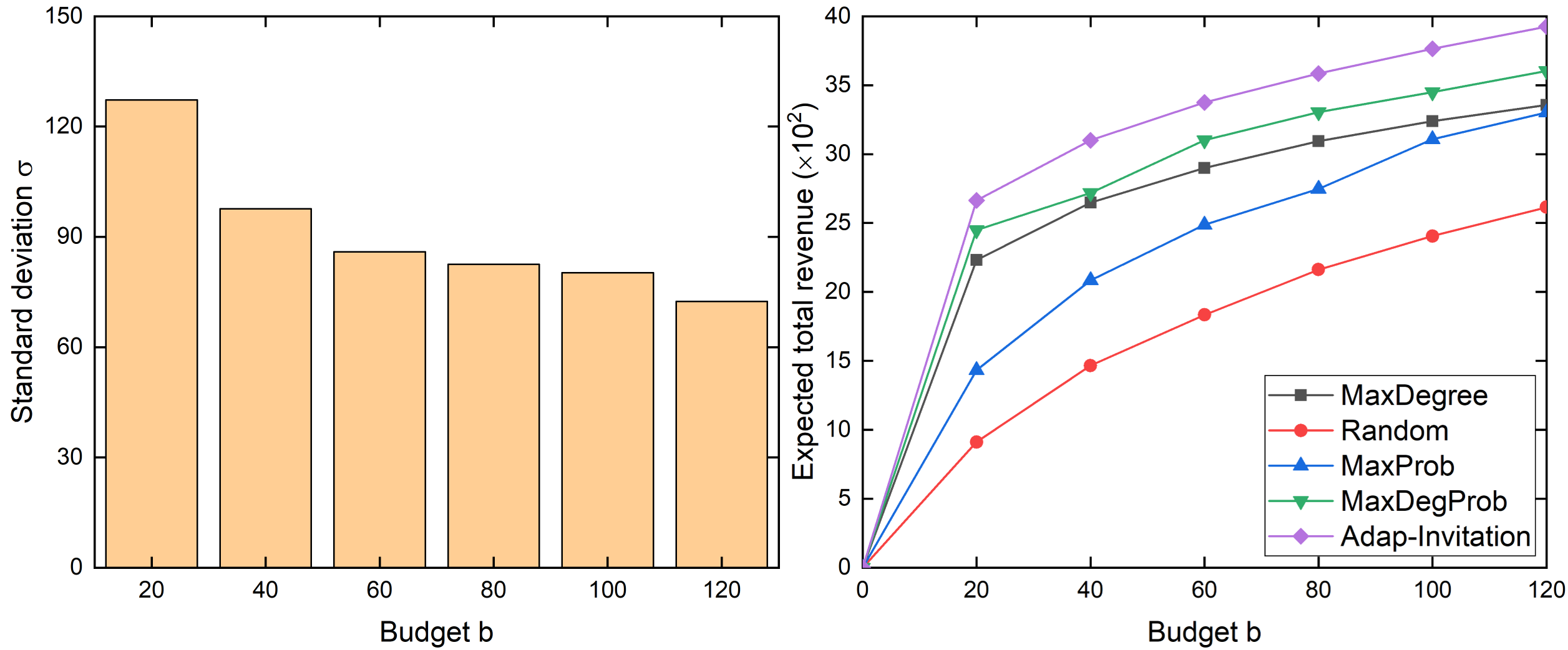}
	\\(b) $k=2$
	\\${}$
	\includegraphics[width=3.5in]{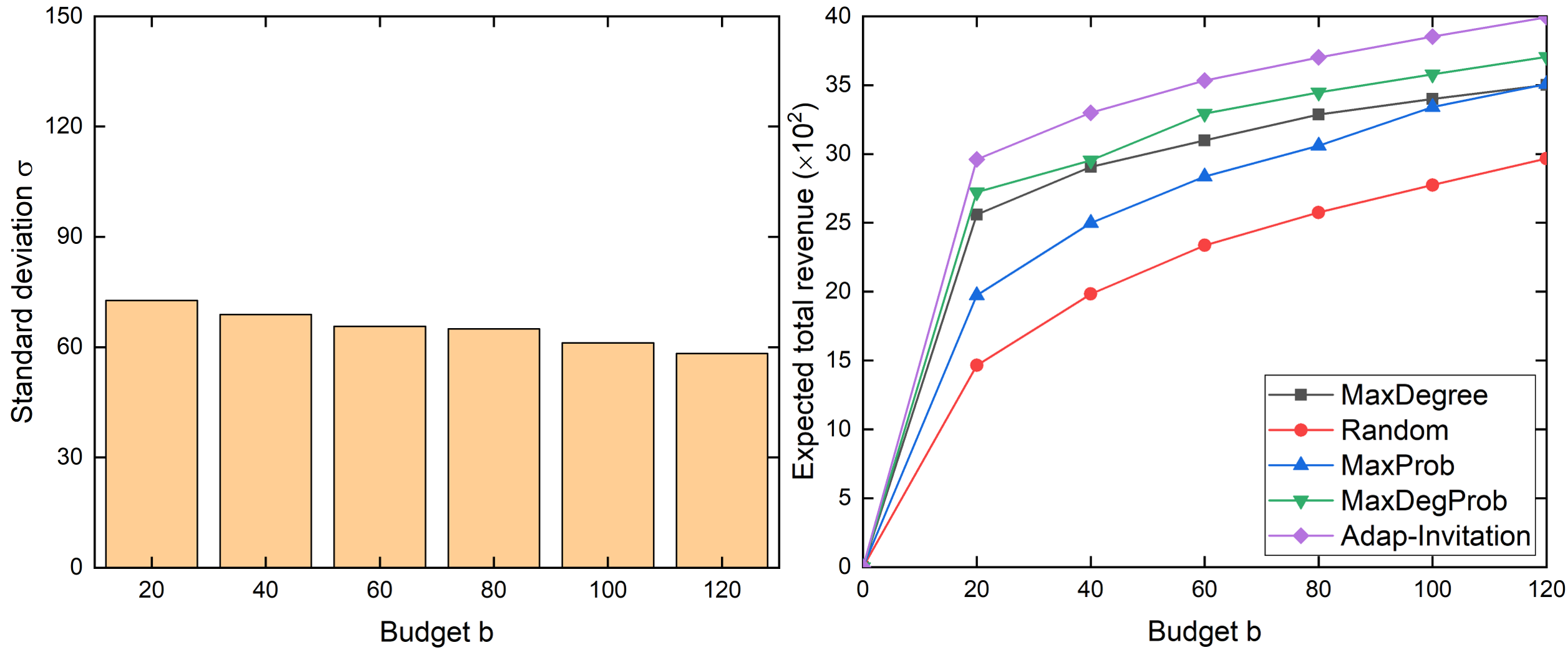}
	\\(b) $k=3$
	\caption{The performance comparison changes between Adaptive-Invitation and other heuristic algorithms over budget $b$ under dataset-2. Left column is the stardard deviation of Adaptive-Invitation; Right column is the performance comparison.}
	\label{fig5}
\end{figure}

\begin{figure}[!t]
	\centering
	\includegraphics[width=3.5in]{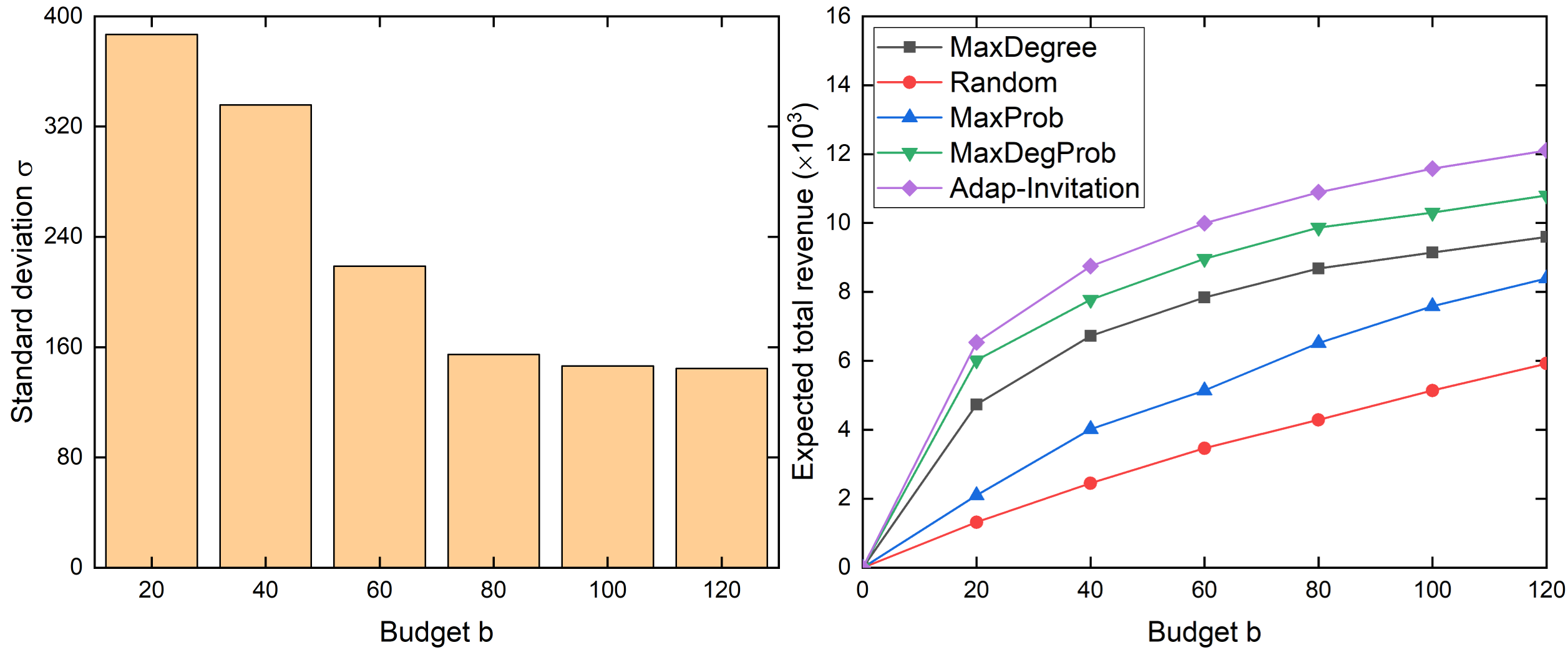}
	\\(a) $k=1$
	\\${}$
	\includegraphics[width=3.5in]{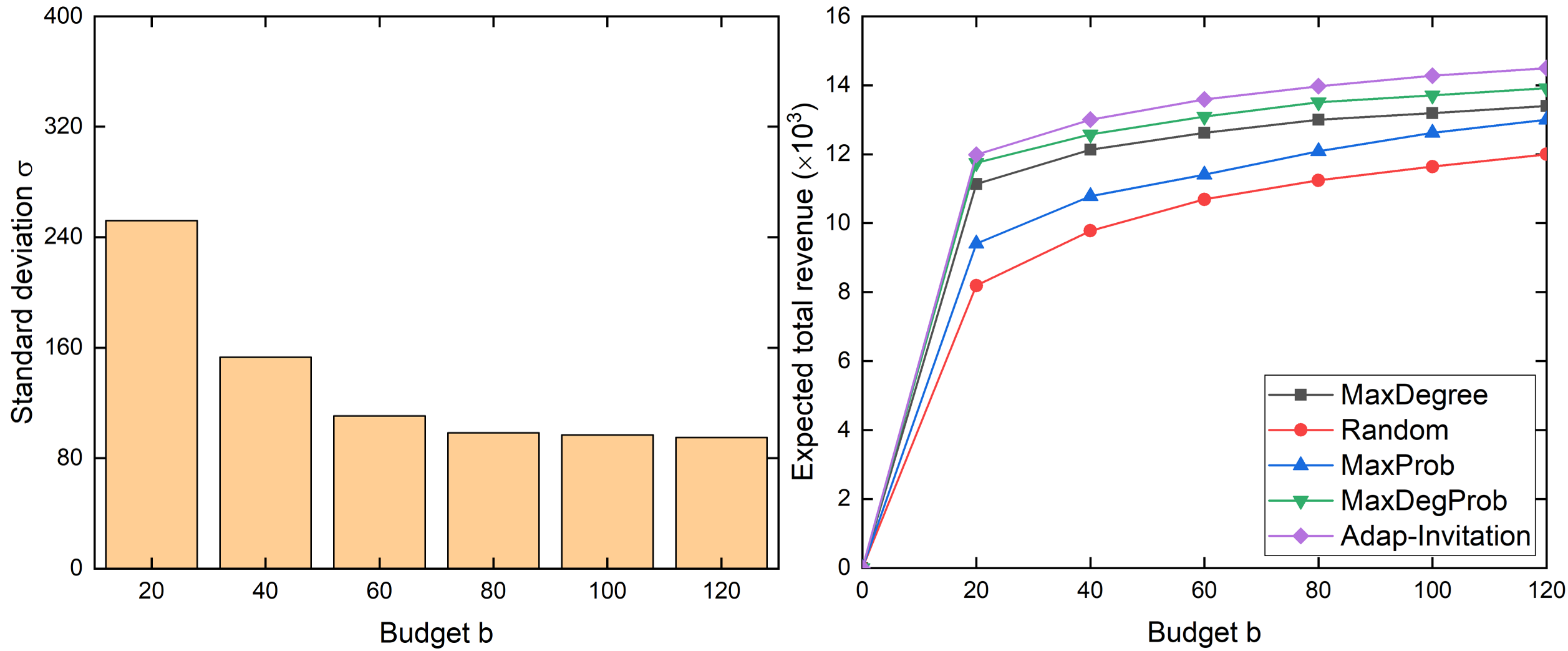}
	\\(b) $k=2$
	\\${}$
	\includegraphics[width=3.5in]{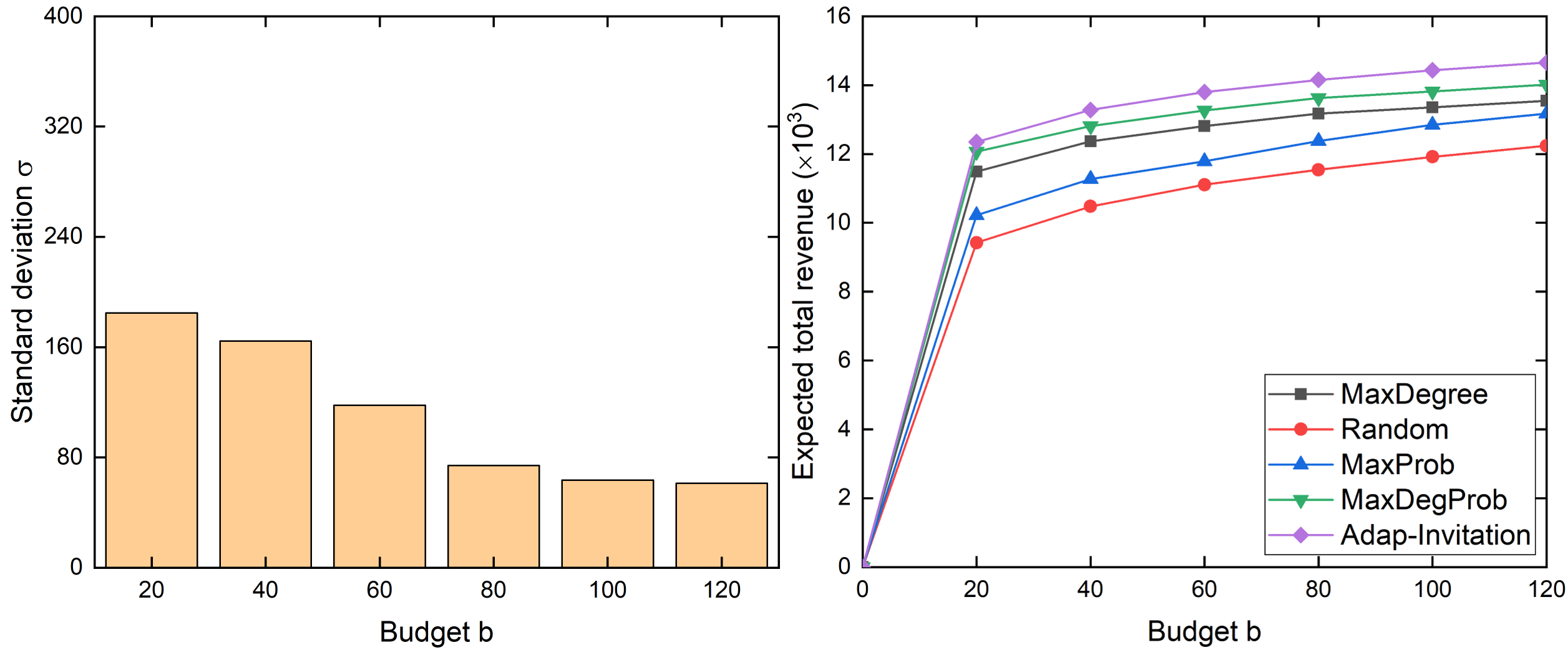}
	\\(b) $k=3$
	\caption{The performance comparison changes between Adaptive-Invitation and other heuristic algorithms over budget $b$ under dataset-3. Left column is the stardard deviation of Adaptive-Invitation; Right column is the performance comparison.}
	\label{fig6}
\end{figure}

\begin{table}[h]
	\renewcommand{\arraystretch}{1.3}
	\caption{The running time comparison}
	\label{table2}
	\centering
	\begin{tabular}{|c|c|c|c|c|}
		\hline
	    \multirow{2}*{} & \multicolumn{2}{|c|}{k=2} & \multicolumn{2}{|c|}{k=3}\\
	    \cline{2-5}
	    & Alg. 2 & M-C & Alg. 2 & M-C\\
		\hline
		b=10 & 0.339s & 544.731s & 1.663s & 858.971s\\
		\hline
		b=20 & 0.578s & 836.811s & 3.077s & 1432.74s\\
		\hline
		b=30 & 0.755s & 1052.06s & 4.244s & 1947.38s\\
		\hline
		b=40 & 0.892s & 1242.95s & 5.318s & 2410.27s\\
		\hline
		b=50 & 1.011s & 1417.15s & 6.309s & 2862.12s\\
		\hline
		b=60 & 1.163s & 1603.03s & 7.253s & 3251.53s\\
		\hline
	\end{tabular}
\end{table}

\subsection{Experimental Results}
In our experiments, the whole graphs are considered as the targeted networks. We run these adaptive algorithms on three datasets, and for each algorithm, we simulate $50$ times and take the average of them, besides, we record the standard deviation for Adaptive-Invitation algorithm. The analysis of these experimental results is summarized as follows.

Fig. \ref{fig4} draws the performance achieved by Adaptive-Invitation algorithm under the dataset-1, which aims to compare the performance and running time of Algorithm \ref{a1}, computing $\Delta(u|\psi)$ by Algorithm \ref{a2} or by Monte-Carlo simulations. In other words, in line 4 of Algorithm \ref{a1}, one uses Algorithm \ref{a2}, another one uses Monte-Carlo simulations. Shown as Fig. \ref{fig4}, the total revenue achieved by use of Algorithm \ref{a2} and Monte-Carlo simulations is very close. It is more apparent when $k=3$, and the result returned by these two methods is almost the same. Even though the performance of Monte-Carlo simulations is slightly better than that of Algorithm \ref{a2}, this difference is acceptable and it is related to topological structure of targeted networks. However, the running time for Algorithm \ref{a2} is much less than Monte-Carlo simulations, and obviously, the larger the graph is, the more significant this gap will be. The running time comparison between Algorithm \ref{a2} and Monte-Carlo simulations is represented in table \ref{table2}. The running speed is increased by at least a thousand times by Algorithm \ref{a2}. From this meaningful results, we can see that Algorithm \ref{a2} makes A-RMKCG problem be scalable to large real social networks.

Fig. \ref{fig5} and Fig. \ref{fig6} draw the performance comparison achieved by Adaptive-Invitation and other heuristic algorithms, and standard deviation of Adaptive-Invitation algorithm under the dataset-2 and dataset-3. From the left column of Fig. \ref{fig5} and Fig. \ref{fig6}, we can observe two features: (1) The standard deviation drops with the increase of hop $k$ under the same dataset; (2) The standard deviation drops with the increase of budget $b$ under the same dataset and hop. We attempt to give a valid explanation here. The uncertainty mainly comes from the nodes' acceptance probability and the edges' probability. When $k$ is smaller, for example, $k=1$, if a user $u$ does not accept the invitation, the revenue from all his/her neighbors is $0$ certainly unless there are other initiators existing among them. Conversely, when $k$ is larger, for example, $k=3$, his/her neighbors are possible to be 1-hop or 2-hop participants even if there is no initiator, which lead to the gap of revenue reduced. Thus, larger $k$ leads to smaller standard deviation. Then, with budget $b$ increasing, marginal gain is decremented generally, and the gap brought by the difference of the number of initiators is reduced. Thus, larger $b$ leads to smaller standard deviation as well. 

From the right column of Fig. \ref{fig5} and Fig. \ref{fig6}, the expected total revenue returned by Adaptive-Invitation is better than other policies under any datasets and hop number, so its performance is the best. Among these heuristic policies, MaxDegreeProb has the best performance, because it considers the degree and acceptance probability comprehensizely. It proves our theoretical analysis in the last section. We have said that the objective function is not adaptive submodular when $k\geq 2$, but in this figure, it shows the characteristics of submodularity as well, which means that the degree of submodularity is related to the structure of networks and probability setting.

\section{Conclusion}
In this paper, we build a new model, Collaborate Game, to model some real scenarios that cannot be covered by existing model. Then, we propose an adaptive RMKCG problem, and prove it is NP-hard, adaptive monotone but not adaptive submodular. Even that, we show that under some special case, $k\leq1$ and $p_e=1$ for all $e\in E(G)$, the objective function is adaptive submodular, which can be solved within $(1-1/e)$-approximation. Besides, we propose an effective method to overcome the difficulty in computing marginal gain, which makes it be scalable. The good performance of our algorithms is verified by our experiments on three real network datasets. The future work can be divided into two parts: (1) Trying to find a more generalized model, and combine with the technique of data mining to optimize the parameter setting. (2) Trying to solve the general case, in other words, get the theoretical bound for adaptive non-submodular cases.

\section*{Acknowledgment}

This work is partly supported by National Science Foundation under grant 1747818.

\ifCLASSOPTIONcaptionsoff
  \newpage
\fi



%

\bibliographystyle{IEEEtran}
\bibliography{references}

%




\end{document}